\newcommand{\ifArxivVersion}{\iftrue}
\newcommand{\ifLocalVersion}{\iffalse}
    \newcommand{\algseparator}{\;}
	\newcommand{\algseparator}{}
\newtheorem{definition}{Definition}[section]
\newtheorem{lemma}[definition]{Lemma}
\newtheorem{theorem}[definition]{Theorem}
\newcommand{\bigo}{\mathcal{O}}
\newcommand{\ballcolor}{\mathsf{color}}
\newcommand{\balance}{\mathsf{balance}}
\newcommand{\predict}{\mathsf{predict}}
\newcommand{\cmp}{\mathsf{cmp}}
\newcommand{\E}{\mathbb{E}}
\newcommand{\cnt}{\mathsf{cnt}}
\newcommand{\cupdot}{\mathbin{\mathaccent\cdot\cup}}
\title{Randomized algorithms for finding a majority element}
\author[1]{Pawe\l{} Gawrychowski}
\author[2]{Jukka Suomela}
\author[3]{Przemys\l{}aw~Uzna\'nski}
\affil[1]{Institute of Informatics, University of Warsaw, Poland}
\affil[2]{Helsinki Institute for Information Technology HIIT, \mbox{Department of Computer Science, Aalto University, Finland}}
\affil[3]{Department of Computer Science, ETH Z\"urich, Switzerland}
\date{}    
\begin{document}

\maketitle

\begin{abstract}
Given $n$ colored balls, we want to detect if more than $\lfloor n/2\rfloor$ of them
have the same color, and if so find one ball with such majority color. We are only allowed
to choose two balls and compare their colors, and the goal is to minimize
the total number of such operations. A well-known exercise is to show how to find such
a ball with only $2n$ comparisons while using only a logarithmic number of bits for
bookkeeping. The resulting algorithm is called the Boyer--Moore
majority vote algorithm. It is known that any deterministic method needs
$\lceil 3n/2\rceil-2$ comparisons in the worst case, and this is tight.
However, it is not clear what is the required
number of comparisons if we allow randomization. We construct a randomized
algorithm which always correctly finds a ball of the majority color (or detects that there is
none) using, with high probability, only $7n/6+o(n)$ comparisons.
We also prove that the expected number of comparisons used by any such randomized
method is at least $1.019n$.
\end{abstract}

\section{Introduction}

A classic exercise in undergraduate algorithms courses is to construct a linear-time
constant-space algorithm for finding the majority in a sequence of $n$ numbers
$a_1,a_2,\ldots,a_n$, that is, a number $x$ such that more than $\lfloor n/2\rfloor$
numbers $a_i$ are equal to $x$, or detect that there is no such $x$. The solution
is to sweep the sequence from left to right while maintaining a candidate and a counter.
Whenever the next number is the same as the candidate, we increase the counter;
otherwise we decrease the counter and, if it drops down to zero, set the candidate to
be the next number. It is not difficult to see that if the majority exists, then it is
equal to the candidate after the whole sweep, therefore we only need to count how
many times the candidate occurs in the sequence. This simple yet beautiful solution
was first discovered by Boyer and Moore in 1980; see~\cite{MJRTY} for the history of the problem.

The only operation on the input numbers used by the Boyer--Moore algorithm is testing
two numbers for equality, and furthermore at most $2n$ such checks are ever being made.
This suggests that the natural way to think about the algorithm is that the input consists
of $n$ colored balls and the only possible operation is comparing the colors of any two
balls. Now the obvious question is how many such comparisons are necessary and
sufficient in the worst possible case. Fischer and Salzberg~\cite{FischerSalzberg} proved that the answer is
$\lceil 3n/2\rceil-2$. Their algorithm is a clever modification of
the original Boyer--Moore algorithm that reuses the results of some previously made
comparisons during the verification phase. They also show that no better solution
exists by an adversary-based argument. However, this argument assumes that the
strategy is deterministic, so the next step is to allow randomization.

Surprisingly, not much seems to be known about randomized algorithms for computing
the majority in the general case. For the special case of only two colors, 
Christofides~\cite{Christofides} gives a randomized algorithm that uses $\frac{2}{3}(1-\frac{\epsilon}{3})n$
comparisons in expectation and returns the correct answer with probability $1-\epsilon$,
and he also proves that this is essentially tight; this improves on
a previous lower bound of $\Omega(n)$ by De Marco and Pelc~\cite{MarcoPelc}.
Note that in the two-color case any deterministic algorithm needs precisely $n-B(n)$ comparisons, where $B(n)$
is the number of 1s in the binary expansion of $n$, and this is 
tight~\cite{SaksWerman,AlonsoLowerbound,Wiener}. For a random input, with each
ball declared to be red or blue uniformly at random, roughly $2n/3$ comparisons
are sufficient and necessary in expectation to find the majority color~\cite{AlonsoAverage}.
To the best of our knowledge upper and lower bounds on the expected number
of comparisons with unbounded number of colors have not been studied
before.

Related work include \emph{oblivious} algorithms studied by Chung et al.~\cite{oblivious},
that is, algorithms in which subsequent comparisons do not depend on the previous answers,
and finding majority with larger queries~\cite{MarcoK15,Eppstein,VizerGKPPW15}.
Another generalization is finding a ball of plurality color, that is, the color that occurs
more often than any other~\cite{AignerMM05,GerbnerKPP13,KralST08}.

We consider minimizing the number of comparisons mostly as an academic exercise,
and believe that a problem with such a simple formulation deserves to be thoroughly
studied. However, it is possible that a single comparison is so expensive that their number
is the bottleneck. Such a line of thought motivated a large body of work
studying the related questions of the smallest number of comparisons required to find
the median element; see~\cite{DorZ99,DorZ01,paterson1996progress} and the references
therein. Of course, the simplest Boyer--Moore algorithm has the advantage of
using only two sequential
scans over the input and a logarithmic number of bits, while our algorithm
needs more space
and random access to the input.

Given that the original motivation of Boyer and Moore was fault-tolerant computing,
we find it natural to consider Las Vegas algorithms,
that is, the number of comparisons depends on the random choices of the algorithm
but the answer is always correct. This way the result is correct even if the source of random
bits is compromised; an adversary controlling the random number generator can only
influence the running time.

\paragraph{Model.} We identify balls with numbers $1,2,\ldots,n$. We write $\cmp(i,j)$ for
the result of comparing the colors of balls $i$ and $j$ (true for equality, false for inequality). We consider randomized
algorithm that, after performing a number of such comparisons, either finds a ball
of the majority color or detects that there is no such color. A majority color is a color 
with the property that more than $\lfloor n/2\rfloor$ balls are of such color. The algorithm
should always be correct, irrespectively of the random choices made during the execution.
However, the colors of the balls are assumed to be fixed in advance, and therefore
the number of comparisons is a random variable. We are interested in minimizing
its expectation.

\paragraph{Contributions.} We construct a randomized algorithm, which always correctly
determines a ball of the majority color or detects that there is none, using $7n/6+o(n)$
comparisons with high probability (in particular, in expectation). We also show that
the expected number of comparisons used by any such algorithm must be at least
$1.019n$.

\section{Preliminaries}

We denote the set of balls (items) by $M = \{1,2,\ldots,n\}$. We write $\ballcolor(x)$ for the color of
ball $x$, and $\cmp(x,y)$ returns true if the colors of balls $x$ and $y$ are identical.

An event occurs \emph{with very high probability} (w.v.h.p.)\ if it happens with probability at least
$1 - \exp(-\Omega(\log^2 n))$. Observe that the intersection of polynomially many very high
probability events also happens with very high probability. 

\begin{lemma}[symmetric Chernoff bound]
The number of successes for $n$ independent coin flips is w.v.h.p.\ at most  $\frac{n}2+\bigo(\sqrt{n} \log n)$.
\end{lemma}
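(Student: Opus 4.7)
The plan is to invoke a standard Chernoff/Hoeffding tail bound and check that with the stated slack of $\bigo(\sqrt{n}\log n)$ the tail probability falls below the w.v.h.p.\ threshold $\exp(-\Omega(\log^2 n))$.

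More concretely, let $X_1,\ldots,X_n$ be independent $\{0,1\}$-valued random variables with $\Pr[X_i=1]=1/2$, and set $X=\sum_{i=1}^n X_i$, so $\E[X]=n/2$. Hoeffding's inequality (or the multiplicative Chernoff bound) gives, for any $t\geq 0$,
\[
    \Pr\!\left[X \geq \tfrac{n}{2} + t\right] \;\leq\; \exp\!\left(-\tfrac{2t^2}{n}\right).
\]
I would then pick $t = c\sqrt{n}\,\log n$ for a suitable constant $c>0$ (any $c>0$ suffices for the asymptotic statement). Substituting gives
\[
    \Pr\!\left[X \geq \tfrac{n}{2} + c\sqrt{n}\log n\right] \;\leq\; \exp\!\left(-2c^{2}\log^{2} n\right) \;=\; \exp\!\left(-\Omega(\log^{2} n)\right),
\]
which is exactly the w.v.h.p.\ guarantee. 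Thus $X \leq n/2 + \bigo(\sqrt{n}\log n)$ w.v.h.p., as claimed.

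There is no real obstacle here: the entire argument is a one-line application of a textbook concentration bound, and the only choice to make is the scaling of the deviation parameter so that the exponent in the tail becomes $-\Omega(\log^2 n)$. The lemma is stated in this particular form (rather than the more familiar $\bigo(\sqrt{n\log n})$ deviation giving high probability $1-1/\mathrm{poly}(n)$) precisely so that its conclusion matches the w.v.h.p.\ definition introduced just above, making it convenient to combine with polynomially many other events later in the paper via a union bound.
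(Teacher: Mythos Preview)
Your proof is correct and is exactly the standard argument one would give; the paper itself does not supply a proof of this lemma at all, treating it as a textbook fact (it is merely labeled ``symmetric Chernoff bound'' and immediately followed by the next lemma). Your derivation via Hoeffding's inequality with $t=c\sqrt{n}\log n$ is precisely the computation that justifies the statement, so there is nothing to compare.
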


\begin{lemma}[sampling]
\label{lem:sampling}
Let $X \subseteq M$ such that $|X| = m$. Let $m'$ denote the number of hits on
elements from $X$ if we sample uniformly at random $k \le n$ elements from $M$
without replacement. Then w.v.h.p.\ $\left|m'/k - m/n\right| = \bigo(k^{-1/2} \log n)$.
\end{lemma}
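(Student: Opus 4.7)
The quantity $m'$ is the number of elements of $X$ appearing in a uniformly random size-$k$ subset of $M$, so $m'$ follows a hypergeometric distribution with mean $\mu := km/n$. Setting $p := m/n$, I want to show that $|m'-\mu|=\bigo(\sqrt{k}\log n)$ w.v.h.p., which after division by $k$ gives the claimed bound $|m'/k - p| = \bigo(k^{-1/2}\log n)$.

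The plan is to apply a Chernoff/Hoeffding tail bound for the hypergeometric distribution. Concretely, write $m' = \sum_{i=1}^{k} X_i$, where $X_i$ is the indicator that the $i$-th sampled ball lies in $X$. While the $X_i$ are not independent, Hoeffding showed that sampling without replacement is at least as concentrated as sampling with replacement, so that
\[
\Pr\bigl[\,|m' - \mu| \ge t\,\bigr] \;\le\; 2\exp\!\bigl(-2t^2/k\bigr).
\]
Plugging in $t = c\sqrt{k}\log n$ for a sufficiently large constant $c$ yields a failure probability of at most $2\exp(-2c^2\log^2 n) = \exp(-\Omega(\log^2 n))$, which is exactly the threshold defining w.v.h.p.

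I would justify the hypergeometric concentration bound either by citing Hoeffding's original reduction to the with-replacement case, or by a short self-contained derivation: couple without-replacement and with-replacement samples so that the without-replacement sum is a conditional expectation of the with-replacement sum, and then invoke the standard Chernoff bound on $k$ independent Bernoulli$(p)$ trials (the previously stated symmetric Chernoff bound handles the balanced $p=1/2$ case; the general case is structurally identical and gives the same $\bigo(\sqrt{k}\log n)$ deviation, since the variance is bounded by $k/4$).

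I do not foresee a real obstacle here; the step that requires the most care is simply making sure the constant in the exponent of the tail bound is large enough that the failure probability lies below $\exp(-\Omega(\log^2 n))$, and that the choice of $t$ matches the $\bigo(k^{-1/2}\log n)$ bound after dividing through by $k$. Once the tail bound is in place, the lemma follows immediately.
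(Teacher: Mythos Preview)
Your proposal is correct and essentially identical to the paper's proof: both apply the Hoeffding-type tail bound $\Pr\bigl[|m'/k - m/n|\ge\delta\bigr]\le 2\exp(-2\delta^2 k)$ for the hypergeometric distribution and substitute $\delta = \Theta(k^{-1/2}\log n)$ to obtain the $\exp(-\Omega(\log^2 n))$ failure probability. The only cosmetic difference is that the paper cites Chv\'atal for the hypergeometric tail inequality, whereas you invoke Hoeffding's reduction to the with-replacement case; these yield the same bound.
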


\begin{proof}
Let $\delta = \Theta( k^{-1/2} \log n)$.
From the tail bound for hypergeometric distribution (see Chv\'atal~\cite{chvatal}), we have
\[\Pr\Bigl( \Bigl|\frac{m'}{k} - \frac{m}{n}\Bigr| \ge \delta\Bigr) \le 2 \exp( -2 \delta^2 k) =  \exp( -\Theta(\log^2 n)).\]

Worth noting is that the error bound here would be essentially the same if we replace ``without replacement'' with ``with replacement''. In that case, we would be invoking a tail bound for binomial distribution.
\end{proof}

Now we consider a process of pairing the items without replacement (choosing a random
perfect matching on $M$; if $n$ is odd then one item remains unpaired). For any $X \subseteq M$,
let $u_{XX}$ be a random variable counting the pairs with both elements belonging to $X$ when
choosing uniformly at random $\frac{n}{2}$ pairs of elements from $M$ without replacement.
Of course $\E[u_{XX}] = \frac{|X|(|X|-1)}{2(n-1)}$.

\begin{lemma}[concentration for pairs]
\label{lem:pairs}
For any $X\subseteq M$ w.v.h.p.
\[\left| u_{XX} - \frac{|X|^2}{2n} \right| = \bigo\bigl(\sqrt{|X|} \log n\bigr).\]
\end{lemma}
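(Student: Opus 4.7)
The plan is to reduce the statement to two applications of Lemma~\ref{lem:sampling} by representing the uniform random perfect matching via a uniform random permutation $\pi$ of $M$, pairing $\pi(2i-1)$ with $\pi(2i)$ for $i = 1, \ldots, \lfloor n/2\rfloor$. Write $m = |X|$ and let $s$ denote the number of $X$-elements at odd positions. I will argue in two stages: first concentrate $s$ around $m/2$, then, conditioning on the placement at odd positions, concentrate $u_{XX}$ around the conditional mean $2s(m-s)/n$.

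For the first stage, observe that the set of positions $\{\pi^{-1}(x) : x \in X\}$ is a uniform random subset of $\{1,\ldots,n\}$ of size $m$. Applying Lemma~\ref{lem:sampling} with target being the $n/2$ odd positions (relative density $1/2$) and sample size $k = m$ yields $|s - m/2| = \bigo(\sqrt{m}\log n)$ w.v.h.p.

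For the second stage, let $O = \{2i-1 : \pi(2i-1) \in X\}$, let $F = \{2i : 2i-1 \in O\}$ be the $s$ even positions ``paired'' with $O$, and let $E = \{2i : \pi(2i) \in X\}$. Since a uniform random $m$-subset of $\{1,\ldots,n\}$, conditioned on its intersection with odd positions, induces a uniform intersection with even positions, $E$ is, conditionally on $O$, uniform over size-$(m-s)$ subsets of the $n/2$ even positions and independent of $F$. Because $u_{XX} = |F \cap E|$, a second application of Lemma~\ref{lem:sampling} (universe the $n/2$ even positions, target $F$ of size $s$, sample size $k = m - s$) gives $|u_{XX} - 2s(m-s)/n| = \bigo(\sqrt{m-s}\,\log n) = \bigo(\sqrt{m}\log n)$ w.v.h.p.

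Combining by a union bound and writing $s = m/2 + \delta$ with $|\delta| = \bigo(\sqrt{m}\log n)$, one has $2s(m-s)/n = m^2/(2n) - 2\delta^2/n$ with $2\delta^2/n = \bigo(m\log^2 n / n) = \bigo(\log^2 n)$; for $m \ge \log^2 n$ this is absorbed into the desired $\bigo(\sqrt{m}\log n)$ error, and for $m < \log^2 n$ the bound is immediate since $u_{XX}, \, m^2/(2n) \le m/2 = \bigo(\sqrt{m}\log n)$. The only real subtlety I anticipate is justifying the conditioning in the second stage: one must verify that conditioning on $O$ truly leaves $E$ uniform on size-$(m-s)$ subsets of even positions and independent of the determined set $F$, which is precisely where the ``factoring'' of a uniform random $m$-subset across odd and even positions is used.
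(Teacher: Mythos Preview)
Your proof is correct and follows essentially the same two-stage approach as the paper: first concentrate the split of $X$ across the two ``sides'' of the matching via one application of Lemma~\ref{lem:sampling}, then, conditionally on that split, identify $u_{XX}$ with a hypergeometric count and apply Lemma~\ref{lem:sampling} again. The only difference is cosmetic: the paper represents the random matching as a uniform bipartition $M=M_1\cupdot M_2$ followed by a uniform bijection $M_1\to M_2$ (so that $X_1,X_2$ play the role of your odd/even $X$-positions and $M'_2$ plays the role of your $F$), whereas you use odd/even positions of a random permutation; these are equivalent samplings of the uniform perfect matching, and the arithmetic with the $2\delta^2/n$ correction matches the paper's $\bigo(\Delta^2/(n/2))$ term.
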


\begin{proof}
For simplicity, we can assume that $n$ is even.
Set $\Delta = \Theta(\sqrt{|X|} \log n)$.
Instead of choosing a random perfect matching on $M$, we consider an equivalent two-step random process:
\begin{enumerate} 
\item We choose uniformly at random a partition of $M$ into $M_1,M_2$ such that $|M_1|=|M_2|=\frac{n}{2}$. This partition induces a partition of $X$ into $X_1,X_2$, such that $X_1 = M_1 \cap X$ and $X_2 = M_2 \cap X$.

Observe that, by Lemma~\ref{lem:sampling}, $|X_1|,|X_2| \in \bigl[\frac{1}{2}|X|-\Delta,\, \frac{1}{2}|X|+\Delta\bigr]$ w.v.h.p.

\item Now, instead of pairing uniformly at random the elements from $M_1$ with the elements from $M_2$, for our purposes it is enough to choose uniformly at random a set $M'_2 \subseteq M_2$ of elements that are paired with $X_1$, and count how many elements of $X_2$ we have chosen. Thus $u_{XX} = |M'_2 \cap X_2|$.

By Lemma~\ref{lem:sampling} we have
\[ \left |\frac{|M'_2 \cap X_2|}{|X_2|} - \frac{|M'_2|}{|M_2|}\right| = \bigo(|X_2|^{-1/2} \log |M_2|) \quad {w.v.h.p.},\]
so the following holds (since $|M'_2| = |X_1|$) w.v.h.p.:
\[\left|u_{XX} - \frac{|X_1||X_2|}{n/2}\right| = \bigo(\sqrt{|X_2|} \log |M_2|).\]
Hence
\[\left|u_{XX} - \frac{|X|^2}{2n}\right| = \bigo(\sqrt{|X_2|} \log |M_2|) + \bigo\biggl(\frac{\Delta^2}{n/2}\biggr) = \bigo(\Delta).\qedhere\]
\end{enumerate}
\end{proof}

\begin{lemma}[pairs in partition]
\label{lem:pairs2}
Let $\mathcal{F} = \{X_1,\ldots,X_m\}$ be a partition of $M$.  Then w.v.h.p.
\[\left|\sum_{X \in \mathcal{F}} u_{XX} - \sum_{X \in \mathcal{F}} \frac{|X|^2}{2n}\right| = \bigo(n^{2/3}\log n).\]
\end{lemma}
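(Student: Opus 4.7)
A naive application of Lemma~\ref{lem:pairs} to every $X \in \mathcal{F}$ followed by Cauchy--Schwarz would give only $\bigo(\sqrt{mn}\log n)$, which is too weak when $m=|\mathcal{F}|$ is large. My plan is therefore to split $\mathcal{F}$ at the threshold $n^{2/3}$ into $\mathcal{L} = \{X \in \mathcal{F} : |X| \ge n^{2/3}\}$ and $\mathcal{S} = \mathcal{F} \setminus \mathcal{L}$, and handle the two families separately, combining them via the triangle inequality at the end.

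For $\mathcal{L}$, at most $|\mathcal{L}| \le n^{1/3}$ parts exist (since $\sum_{X \in \mathcal{L}} |X| \le n$), so a union bound of Lemma~\ref{lem:pairs} still succeeds w.v.h.p.\ and yields $|u_{XX} - |X|^2/(2n)| = \bigo(\sqrt{|X|}\log n)$ for every $X \in \mathcal{L}$. Summing and invoking Cauchy--Schwarz gives $\sum_{X \in \mathcal{L}} \sqrt{|X|} \le \sqrt{|\mathcal{L}|\cdot \sum_{X \in \mathcal{L}} |X|} \le \sqrt{n^{1/3}\cdot n} = n^{2/3}$, so the large-part contribution to the target deviation is $\bigo(n^{2/3}\log n)$ w.v.h.p.

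For $\mathcal{S}$ the deterministic sum is already tiny: using $|X| \le n^{2/3}$, one gets $\sum_{X \in \mathcal{S}} |X|^2/(2n) \le n^{2/3}\cdot (\sum_{X \in \mathcal{S}} |X|)/(2n) \le n^{2/3}/2$, and the expectation $\E[\sum_{X \in \mathcal{S}} u_{XX}]$ differs from this by $\bigo(1)$. So it is enough to show $\sum_{X \in \mathcal{S}} u_{XX} = \bigo(n^{2/3}\log n)$ w.v.h.p. I would reuse the two-step representation from the proof of Lemma~\ref{lem:pairs}: first draw a random half-partition $M = M_1 \cup M_2$, then a uniform bijection $\phi \colon M_1 \to M_2$, so that $\sum_{X \in \mathcal{S}} u_{XX} = \sum_{x \in M_1} \mathbf{1}[\phi(x) \in p(x),\, p(x) \in \mathcal{S}]$. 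Revealing the images $\phi(x_1),\phi(x_2),\ldots$ one at a time defines a Doob martingale with increments bounded by $\bigo(1)$, so Azuma--Hoeffding yields concentration $\bigo(\sqrt{n}\log n)$ around the conditional mean w.v.h.p., comfortably inside the $n^{2/3}\log n$ budget.

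The main obstacle will be justifying the bounded-differences estimate for the small-part count: the variance alone is $\bigo(n^{2/3})$, which via Chebyshev only gives polynomial tails and falls short of w.v.h.p., so the permutation-style concentration is genuinely needed. The key step is verifying that fixing one additional image $\phi(x_k) = y$ shifts the expected remaining contribution by only a constant, which can be done by a coupling between the uniform distributions on injections with and without this constraint. Once the $\mathcal{L}$- and $\mathcal{S}$-bounds are combined, the triangle inequality delivers the claimed $\bigo(n^{2/3}\log n)$ deviation w.v.h.p.
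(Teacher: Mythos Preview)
Your proposal is correct and your split at the threshold $n^{2/3}$ matches the paper's, as does your treatment of the large family $\mathcal{L}$ via Lemma~\ref{lem:pairs} and Cauchy--Schwarz. The difference lies in how the small family $\mathcal{S}$ is handled. The paper avoids introducing any new concentration tool: it \emph{merges} the sets of $\mathcal{S}$ greedily into a coarser partition $\mathcal{C}$ whose blocks all have size in $[n^{2/3},2n^{2/3}]$, observes the monotonicity $\sum_{X\in\mathcal{S}} u_{XX}\le \sum_{X\in\mathcal{C}} u_{XX}$ (refinements can only decrease the count of same-block pairs), and then applies Lemma~\ref{lem:pairs} once more to the at most $n^{1/3}$ blocks of $\mathcal{C}$; since $\sum_{X\in\mathcal{C}}|X|^2/(2n)\le n^{2/3}$, this yields $\sum_{X\in\mathcal{S}} u_{XX}\le\bigo(n^{2/3}\log n)$ w.v.h.p.\ using only the hypergeometric tail bound already in hand. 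Your route instead brings in a bounded-differences/Azuma argument for permutation statistics; this is a legitimate and standard tool, and your Lipschitz claim (swapping two images of the bijection changes the small-part pair count by at most a constant) is correct, so the Doob martingale gives even tighter $\bigo(\sqrt{n}\log n)$ fluctuations. The trade-off is that the paper's argument is entirely self-contained from Lemma~\ref{lem:pairs}, whereas yours imports an additional inequality but is arguably more direct and gives a slightly sharper intermediate bound.
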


\begin{proof}
Let $\mathcal{F}'$ be a partition $\{X'_1,\ldots,{X'}_{m'}\}$ obtained from $\mathcal{F}$ by merging all sets $X_i$ such that $|X_i| < n^{2/3}$ into larger sets of size between $n^{2/3}$ and $2n^{2/3}$ (this can be done in a greedy fashion). Let $\mathcal{A}$ be the family of original large sets, $\mathcal{B}$ the family of original small sets, and $\mathcal{C}$ the family of new larger sets.

Since $\mathcal{B}$ is a finer partition than $\mathcal{C}$, we have that
\[\sum_{X \in \mathcal{B}} u_{XX} \le \sum_{X \in \mathcal{C}} u_{XX},\quad
\sum_{X \in \mathcal{B}} \frac{|X|^2}{2n} \le \sum_{X \in \mathcal{C}} \frac{|X|^2}{2n}.\]
Since all sets in $\mathcal{C}$ are smaller than $2n^{2/3}$, we obtain
\[\sum_{X \in \mathcal{C}} \frac{|X|^2}{2n} \le \biggl(\sum_{X \in \mathcal{C}} |X|\biggr) n^{-1/3} \le n^{2/3}.\]
Then, because all sets in $\mathcal{C}$ are large, by a a direct application of Lemma~\ref{lem:pairs}, w.v.h.p.
\[\sum_{X \in \mathcal{C}} u_{XX} \le \sum_{X \in \mathcal{C}} \left(\frac{|X|^2}{2n} + \bigo(\sqrt{|X|}\log n)\right) \le  n^{2/3} + \bigo(\log n) \sum_{X\in \mathcal{C}} \sqrt{|X|} \leq \bigo(n^{2/3}\log n).\]
Similarly, because all sets in $\mathcal{F}'$ are large, by an application of Lemma~\ref{lem:pairs}, w.v.h.p.
\[\left|\sum_{X \in \mathcal{F'}} u_{XX} - \sum_{X \in \mathcal{F'}} \frac{|X|^2}{2n}\right| \le \sum_{X \in \mathcal{F'}} \bigo\bigl(\sqrt{|X|} \log n\bigr)  = \bigo(n^{2/3} \log n).\]
Now our goal is to bound
\[\left|\sum_{X \in \mathcal{F}} u_{XX} - \sum_{X \in \mathcal{F}} \frac{|X|^2}{2n}\right|\]
which, because $|a+b| \leq |a|+|b|$, is at most
\[\left|\sum_{X \in \mathcal{B}} u_{XX} - \sum_{X \in \mathcal{C}} u_{XX}\right| + \left|\sum_{X \in \mathcal{F'}} u_{XX} - \sum_{X \in \mathcal{F'}} \frac{|X|^2}{2n}\right| + \left|\sum_{X \in \mathcal{B}} \frac{|X|^2}{2n} - \sum_{X \in \mathcal{C}} \frac{|X|^2}{2n}\right|.\]
The first and the third addend can be bounded by $\bigo(n^{2/3}\log n)$ because if $0\leq a\leq b$ then $|b-a|\leq b$, hence by plugging in the previously derived bounds we obtain
\[\left|\sum_{X \in \mathcal{F}} u_{XX} - \sum_{X \in \mathcal{F}} \frac{|X|^2}{2n}\right| \leq \bigo(n^{2/3}\log n) + \bigo(n^{2/3}\log n) + \bigo(n^{2/3}) = \bigo(n^{2/3}\log n).\qedhere\]
\end{proof}

\begin{lemma}
\label{lem:drawing_until}
Let $X \subseteq M$ such that $|X|=m$. Let $k(m',m,n)$ denote the number of draws without replacement until we hit $m'$ elements from $X$. Then w.v.h.p.
\[k(m',m,n) \le \frac{n}{m}m' + \bigo\Bigl(\frac{n}{\sqrt{m}} \cdot \log n\Bigr).\]
\end{lemma}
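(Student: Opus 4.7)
The plan is to invert Lemma~\ref{lem:sampling}: set $k^{\star}=\frac{n}{m}m'+C\frac{n}{\sqrt{m}}\log n$ for a constant $C$ to be fixed, and argue that among the first $k^{\star}$ uniformly random draws without replacement the count $H_{k^{\star}}$ of items from $X$ is at least $m'$ with very high probability. Since the events $\{k(m',m,n)\le k^{\star}\}$ and $\{H_{k^{\star}}\ge m'\}$ coincide, this suffices.

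I first dispose of the trivial regime $k^{\star}\ge n$, where $k(m',m,n)\le n\le k^{\star}$ holds deterministically because at most $n$ draws are ever performed; this automatically covers $m=\bigo(\log^{2}n)$, for which the additive slack already dominates $n$. From now on I may therefore assume $k^{\star}<n$, and in particular $m=\Omega(\log^{2}n)$.

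The count $H_{k^{\star}}$ is hypergeometric with mean $k^{\star}m/n=m'+C\sqrt{m}\log n$ and variance $k^{\star}\cdot\frac{m}{n}\cdot\frac{n-m}{n}\cdot\frac{n-k^{\star}}{n-1}\le m$. The shortfall event $\{H_{k^{\star}}<m'\}$ corresponds to a deviation of at least $C\sqrt{m}\log n$ below the mean, i.e.\ $\Theta(C\log n)$ standard deviations; invoking the Chernoff form of Chv\'atal's tail inequality (the sharpening of the Hoeffding-type bound used in the proof of Lemma~\ref{lem:sampling}) yields
\[\Pr[H_{k^{\star}}<m']\le \exp\bigl(-\Omega(C^{2}\log^{2}n)\bigr),\]
which is w.v.h.p.\ once $C$ is taken large enough, completing the argument.

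The main obstacle is that plugging $k=k^{\star}$ into Lemma~\ref{lem:sampling} as a pure black box only gives a Hoeffding-type error of order $\sqrt{k^{\star}}\log n=\bigo(\sqrt{n}\log n)$; inverting this would produce $k(m',m,n)\le \frac{n}{m}m'+\bigo\bigl(\frac{n^{3/2}}{m}\log n\bigr)$, which is a factor $\sqrt{n/m}$ worse than the claim. The fix is to use the sharper variance-aware tail, exploiting $\sigma^{2}=\bigo(m)$ rather than the crude $\sigma^{2}\le k^{\star}$; this is precisely what converts the $\sqrt{n}$ into $\sqrt{m}$ in the error term and recovers the advertised $\frac{n}{\sqrt{m}}\log n$ factor.
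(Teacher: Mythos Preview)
Your proof is correct and follows the same plan as the paper: fix $k^{\star}=\frac{n}{m}m'+C\frac{n}{\sqrt{m}}\log n$ and show that after $k^{\star}$ draws one already has at least $m'$ hits on $X$ w.v.h.p. The only difference lies in how the crucial deviation estimate $|H_{k^{\star}}-k^{\star}m/n|=\bigo(\sqrt{m}\log n)$ is obtained. The paper simply invokes Lemma~\ref{lem:sampling}, implicitly with the roles of $k$ and $m$ exchanged via the symmetry of the hypergeometric distribution, which turns the stated $\bigo(\sqrt{k}\log n)$ error into $\bigo(\sqrt{m}\log n)$ in one stroke. You instead reach the same conclusion through a variance-aware Chernoff/Bernstein tail, using $\sigma^{2}\le m$. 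Both routes are valid; yours is slightly heavier but makes explicit the point you correctly flag in your last paragraph---that the Hoeffding-form bound of Lemma~\ref{lem:sampling} applied in the $k$-direction alone would lose a factor $\sqrt{n/m}$ and not suffice.
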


\begin{proof}
Denoting by $S$ the number hits on $X$ when drawing without replacement $k$ items, then
by Lemma~\ref{lem:sampling} w.v.h.p.
\[S \ge \frac{m}{n}k - C \cdot m^{1/2} \log n\]
for some constant $C>0$. Substituting $k = \frac{n}{m}m' + C\cdot \frac{n}{\sqrt{m}} \log n$ we obtain
\[S  \ge m' + C \cdot m^{1/2} \log n - C \cdot m^{1/2} \log n = m'\]
meaning that indeed w.v.h.p.\ after $k$ draws we will have at least $m'$ elements from $X$.
\end{proof}

\section{Algorithm}

In this section we describe a randomized algorithm for finding majority. Recall
that the algorithm is required to always either correctly determine a ball of the majority
color or decide that there is no such color, and the majority color is a color of more
than $\lfloor n/2\rfloor$ balls. For simplicity we will assume for the time being that
$n$ is even, as the algorithm can be adjusted for odd $n$ in a straightforward manner without any change to the asymptotic cost.
Hence to prove that there is a majority color, it is sufficient to find $n/2+1$ balls
with the same color. In such case our algorithm will actually calculate the multiplicity
of the majority color. To prove that there is no majority color, it is sufficient to partition
the input into $n/2$ pairs of balls with different colors.

\pagebreak

The algorithm consists of three parts. Intuitively, by choosing a small random sample
we can approximate the color frequencies and choose the right strategy:
\begin{enumerate}
\item[(i)] There is one color with a large
frequency. We use algorithm $\textsc{heavy}$. In essence, we have only one candidate for the majority, and
we compute the frequency of the candidate in a naive manner. If the frequency is too small, we need to form sufficiently many pairs of balls with different colors among the balls that are not of the candidate color. This can be done by virtually pairing the non-candidate color elements, and testing these pairs until we find enough of them that have distinct colors. Additionally, we show that one sweep through the pairs is enough.
\item[(ii)] There are two colors with frequencies close to $0.5$. Now we use algorithm $\textsc{balanced}$. In essence, we can now reduce the size of the input by a pairing process, and then find the majority recursively.
If the recursion finds the majority, the necessary verification step is speeded up by reusing the results of the comparisons used to form the pairs.
\item[(iii)] All frequencies are small. We use $\textsc{light}$ which, as $\textsc{balanced}$, applies pairing and recursion. However, if the recursive call reports the majority, we construct enough pairs with different colors: whenever we find a pair of elements with both colors different than the majority color found by the recursive call, we pair them with elements of the majority color. Here we speeded up the process by reusing the results of the comparisons used to form the pairs as well.
\end{enumerate}

We start with presenting the main procedure of the algorithm; see Algorithm~\ref{algo:majority}.
The parameters are chosen by setting $\alpha = \frac13$, $\varepsilon = n^{-1/10}$ and $\beta = 0.45$. In fact we could chose any $\beta \in (\beta_1,\beta_2)$, where $\beta_1 = 1-\frac1{\sqrt{3}} \approx 0.4226$ and $\beta_2 \approx 0.47580$ is a root to $p^3-19 p^2-8 p + 8 = 0$.

\begin{algorithm}
\caption{$\textsc{majority}(M)$\label{algo:majority}}
\lIf{$|M|=1$}{%
  \KwRet{$M[1]$} is the majority with multiplicity 1 in $M$
}
sample $M' \subseteq M$ such that $|M'| = n^{\alpha}$\;
let $v_1,v_2,\ldots,v_k$ be the representatives of the colors in $M'$\;
let $q_i |M'|$ be the frequency of $\ballcolor(v_i)$ in $M'$, where $q_1 \ge q_2 \ge \ldots \ge q_k$\label{line:sampling}\;
\If{$q_1,q_2 \in [\frac{1}{2}-4\varepsilon,\frac{1}{2}+4\varepsilon]$}{
  \KwRet{$\textsc{balanced}(M)$}
}\ElseIf{$q_1 \ge \beta$ and $q_1^2 \ge q_2^2+\ldots+q_k^2 + 2\varepsilon$}{
  \KwRet{$\textsc{heavy}(M,v_1)$}
}\Else{
  \KwRet{$\textsc{light}(M)$}
}
\end{algorithm}

Before we proceed to describe the subprocedures, we elaborate on the sampling performed
in line~\ref{line:sampling}. Intuitively, we would like to compute the frequencies of all colors
in $M$. This would be too expensive, so we select a small sample $M'$ and claim that the
frequencies of all colors in $M'$ are not too far from the frequencies of all colors in $M$.
Formally, let $p_1, p_2, p_3, \ldots, p_\ell$ be the frequencies of all colors in $M$, that is
there are $p_i \cdot n$ balls of color $i$ in $M$ and let $q_i$ be the
frequency of color $i$ in the sample $M'$. By Lemma~\ref{lem:sampling}, w.v.h.p.\ 
$|p_i - q_i| = \bigo(n^{-\alpha/2} \log n)=o(\varepsilon)$. We argue that $\sum_i q_i^2$
is a good estimation of $\sum_i p_i^2$.

\begin{lemma}
\label{lem:sum_of_squares}
Let $p_i$ be the frequency of color $i$ in $M$ and $q_i$ be its frequency in $M'$, where
$M'\subseteq M$ a random sample without replacement of size $n^\alpha$. Then w.v.h.p.
\[\left|\sum_i p_i^2 - \sum_i q_i^2\right| = \bigo(n^{-\alpha/3} \log n) = o(\varepsilon) .\]
\end{lemma}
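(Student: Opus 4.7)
The plan is to control each color's individual deviation $|q_i-p_i|$ via Lemma~\ref{lem:sampling} and then convert a bound on frequency deviations into a bound on deviations of their squares. Concretely, I would fix a color $i$, let $X_i\subseteq M$ be the set of balls of that color (so $|X_i|=p_i n$), and apply Lemma~\ref{lem:sampling} with $X=X_i$ and $k=|M'|=n^{\alpha}$. This gives w.v.h.p.\ that
\[|q_i-p_i|=\bigo(k^{-1/2}\log n)=\bigo(n^{-\alpha/2}\log n).\]
Since the number of distinct colors in $M$ is at most $n$, a union bound (which preserves the w.v.h.p.\ guarantee, as noted just after the definition of w.v.h.p.) yields $\max_i|q_i-p_i|=\bigo(n^{-\alpha/2}\log n)$ w.v.h.p.

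With this uniform control in hand, I would pass from frequencies to squared frequencies using the factorisation $q_i^2-p_i^2=(q_i-p_i)(q_i+p_i)$ together with the triangle inequality:
\[\left|\sum_i p_i^2-\sum_i q_i^2\right|\le \sum_i|q_i-p_i|\,(q_i+p_i)\le \Bigl(\max_i|q_i-p_i|\Bigr)\sum_i(q_i+p_i)=2\max_i|q_i-p_i|.\]
Here I use $\sum_i p_i=\sum_i q_i=1$. Plugging in the bound on $\max_i|q_i-p_i|$ gives $\bigo(n^{-\alpha/2}\log n)$, which is in particular $\bigo(n^{-\alpha/3}\log n)$ as claimed. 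The closing $o(\varepsilon)$ bound then holds because $\alpha=\frac13$ and $\varepsilon=n^{-1/10}$ imply $n^{-\alpha/3}\log n=n^{-1/9}\log n=o(n^{-1/10})$.

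I do not expect a serious obstacle. The one point to be careful about is that the number of colors could, a priori, be as large as $n$, so the union bound step must explicitly invoke the fact that the w.v.h.p.\ guarantee survives polynomially many events; this is precisely what is recorded in the preliminaries. A minor side remark is that the argument actually establishes the slightly stronger bound $\bigo(n^{-\alpha/2}\log n)$, so the stated $\bigo(n^{-\alpha/3}\log n)$ is absorbed comfortably and leaves room for slack in the constants.
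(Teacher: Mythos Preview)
Your argument is correct and, in fact, more elementary than the paper's own proof. You bound each $|q_i-p_i|$ via Lemma~\ref{lem:sampling}, take a union bound over the at most $n$ colors, and then convert to squares using $q_i^2-p_i^2=(q_i-p_i)(q_i+p_i)$ together with $\sum_i(q_i+p_i)=2$; this cleanly yields $\bigo(n^{-\alpha/2}\log n)$, which is strictly sharper than the stated $\bigo(n^{-\alpha/3}\log n)$. The paper instead proceeds through the random-pairing machinery: it relates $\sum_i p_i^2$ to the number of same-color pairs in a random perfect matching on $M$ (Lemma~\ref{lem:pairs2}), relates $\sum_i q_i^2$ to the same quantity on $M'$, and couples the two by observing that ``pair up $M$, then subsample $m/2$ pairs'' has the same distribution as ``subsample $M'$, then pair it up''. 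That route is more in keeping with the other concentration lemmas in the preliminaries, but it loses a factor in the exponent (the $m^{-1/3}$ comes from Lemma~\ref{lem:pairs2}), whereas your direct approach avoids the pairing detour and keeps the full $m^{-1/2}$ rate. Both arguments rely on the same union-bound-over-polynomially-many-events observation to stay w.v.h.p.; you handle this point correctly.
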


\begin{proof}
Let $m  = n^{\alpha}$. We analyze the following two sampling methods. 

\begin{enumerate}
\item Partition the elements of $M$ into $\frac{n}{2}$ disjoint pairs uniformly at random.
Select $\frac{m}{2}$ of these pairs uniformly at random. Denote by $A_1$ and $A_2$
the pairs with both elements of the same colors in the first and the second pairing, respectively.
By Lemma~\ref{lem:pairs2}, w.v.h.p.\ $\left||A_1| - \frac{n}{2} \sum_i p_i^2\right| = \bigo(n^{2/3} \log n)$.
Observe that by Lemma~\ref{lem:sampling} w.v.h.p.\ $\left| |A_2| - \frac{m}{n} |A_1| \right| = \bigo(m^{1/2} \log n)$. Thus, by the triangle inequality, w.v.h.p.
\[ \left| \frac{|A_2|}{m/2} - \sum_i p_i^2\right| = \bigo(n^{-1/3} \log n) + \bigo(m^{-1/2} \log n).\]

\item Partition the elements of $M'$ into $\frac{m}{2}$ disjoint pairs uniformly at random, and denote
by $B$ all pairs with both elements of the same color. By Lemma~\ref{lem:pairs2}, w.v.h.p.\ 
$\left| |B| - \frac{m}{2} \sum_i q_i^2 \right| = \bigo(m^{2/3} \log n)$, or equivalently
$\left|\frac{|B|}{m/2} - \sum_i q_i^2 \right| = \bigo(m^{-1/3} \log n)$.
\end{enumerate}

Now, because $A_2$ and $B$ have identical distributions, by the triangle inequality we have
\[
\left|\sum_i p_i^2 - \sum_i q_i^2\right| = \bigo(n^{-1/3} \log n) + \bigo(m^{-1/2} \log n) + \bigo(m^{-1/3} \log n) = \bigo(m^{-1/3} \log n). \qedhere
\]
\end{proof}

Now we present the subprocedures; see Algorithms \ref{algo:balanced}--\ref{algo:light}.

\begin{algorithm}[p]
\caption{$\textsc{balanced}(M)$\label{algo:balanced}}
randomly shuffle $M$\;
$X \gets [], Y\gets []$\;
\For{$i=1$ \KwSty{to} $|M|/2$}{
   \eIf{$\cmp(M[2i-1],M[2i])$}{append $M[2i]$ to $X$}{append $M[2i-1]$ and $M[2i]$ to $Y$}
}
run $\textsc{majority}(X)$\;
\lIf{\text{there is no majority in $X$}}{\KwRet{}\,no majority in $M$}\algseparator
let $\ballcolor(v)$ be the majority with multiplicity $k$ in $X$\;
$\cnt \gets 2k$\;
\For{$i=1$ \KwSty{to} $|Y|/2$}{\label{line:counting1}
  \If{$\cmp(v,Y[2i-1])$}{
    $\cnt \gets \cnt+1$
  }\ElseIf{$\cmp(v,Y[2i])$}{
    $\cnt \gets \cnt+1$
  }
}
\eIf{$\cnt \leq |M|/2$}{\KwRet{no majority in $M$}}{\KwRet{$\ballcolor(v)$ is the majority with multiplicity $k$ in $X$}}
\end{algorithm}

\begin{algorithm}
\caption{$\textsc{heavy}(M,v)$\label{algo:heavy}}
$\cnt \gets 0$,  $X \gets []$\;
\For{$i=1$ \KwSty{to} $|M|$}{
  \If{$\cmp(v,M[i])$}{
    $\cnt \gets \cnt+1$
  }\Else{
    append $M[i]$ to $X$
  }
}
\lIf{$\cnt > |M|/2$}{%
  \KwRet{$\ballcolor(v)$ is the majority with multiplicity $k$ in $M$} \label{line:majority_return}
}\algseparator
$k \gets |M|/2-\cnt$\;
randomly shuffle $X$\;
\For{$i=1$ \KwSty{to} $|X|/2$}{ \label{line:heavy_loop}
  \lIf{$\neg\cmp(X[2i-1],X[2i])$}{$k \gets k-1$}
  \lIf{$k = 0$}{%
    \KwRet{no majority in $M$}
  }
}
\KwRet{\textsc{Boyer--Moore}(M)} \Comment*[r]{fallback, $2n$ comparisons}
\end{algorithm}
\begin{algorithm}
\caption{$\textsc{light}(M)$\label{algo:light}}
randomly shuffle $M$\;
$X \gets [], Y \gets []$\;
\For{$i=1$ \KwSty{to} $|M|/2$}{ \label{line:light_partition}
  \If{$\cmp(M[2i-1],M[2i])$}{
    append $M[2i]$ to $X$
  }\Else{
    append $M[2i-1]$ and $M[2i]$ to $Y$
  }
}
run $\textsc{majority}(X)$\;
\lIf{\text{there is no majority in $X$}}{\KwRet{no majority in $M$}}\algseparator
let $\ballcolor(v)$ be the majority with multiplicity $k$ in $X$\;
$\cnt \gets 2k-|X|$\;
\For{$i=1$ \KwSty{to} $|Y|$}{ \label{line:light_loop}
  \If{$\neg \cmp(v,Y[2i-1])$}{ \If{$\neg \cmp(v,Y[2i])$}{ $\cnt \gets \cnt-1$ } }
  \lIf{$\cnt = 0$}{\KwRet{no majority in $M$} }
}
\KwRet{$\ballcolor(v)$ is the majority with multiplicity $(|M|/2+\cnt)$ in $M$}
\end{algorithm}

\begin{lemma}
Algorithm \ref{algo:majority} always returns the correct answer.
\end{lemma}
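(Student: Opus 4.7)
My proposal is to prove correctness by induction on $|M|$. The base case $|M|=1$ returns the unique ball as majority, which is trivially correct. For the inductive step, observe that the sampling in line~\ref{line:sampling} together with the if/else cascade only \emph{selects} which of the three subroutines is called; it has no bearing on correctness. Thus it suffices to verify that each of \textsc{heavy}, \textsc{balanced}, and \textsc{light} is correct, using the induction hypothesis for the recursive call \textsc{majority}($X$) on a strictly smaller input and the classical correctness of \textsc{Boyer--Moore} for the fallback inside \textsc{heavy}.

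A single combinatorial observation drives both \textsc{balanced} and \textsc{light}: for any partition of $M$ into pairs (and the induced construction of $X$ as one representative per same-color pair and $Y$ as both balls of every different-color pair), every strict majority color of $M$ is also a strict majority color of $X$. This is a short counting exercise: classifying pairs by whether both, exactly one, or neither of the balls have color $c$, if $c$ occurs more than $|M|/2$ times in $M$ then the number $p_{cc}$ of $(c,c)$ pairs strictly exceeds the number of pairs in which both balls have a color different from $c$, which immediately yields $p_{cc} > |X|/2$. Consequently, if the recursive call reports no majority in $X$, both \textsc{balanced} and \textsc{light} correctly conclude no majority in $M$; otherwise the color $v$ returned by the recursion is the only possible candidate for the majority of $M$ and must be verified via $Y$.

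For \textsc{balanced}, each pair in $Y$ consists of two balls of different colors and so contains at most one occurrence of any particular color; hence after scanning $Y$ the variable $\cnt$ equals exactly $2k$ plus the number of $v$-colored balls in $Y$, which is the true multiplicity of $v$ in $M$, and comparing against $|M|/2$ decides correctly. For \textsc{light} I will verify the identity $(\text{multiplicity of } v \text{ in } M) = |M|/2 + \cnt$ at the end of the loop: letting $t$ denote the number of $Y$-pairs containing exactly one $v$, a direct count using $|X| + |Y|/2 = |M|/2$ shows that the initial value $2k - |X|$ minus the number of $Y$-pairs with neither ball equal to $v$ equals $2k + t - |M|/2$. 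Since the initial $\cnt$ is strictly positive (as $k > |X|/2$ for a strict majority of $X$) and decrements by at most one per iteration, $\cnt$ reaches zero inside the loop precisely when the true multiplicity of $v$ in $M$ is at most $|M|/2$, so both the early termination and the final return are sound.

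Finally, \textsc{heavy}$(M,v)$ is justified by a matching argument. After the initial scan, $\cnt$ is the true count of $\ballcolor(v)$ in $M$, so the fast path at line~\ref{line:majority_return} is correct. Otherwise $v$ is not the majority and the goal is to exclude every other color: if the loop at line~\ref{line:heavy_loop} finds $k = |M|/2 - \cnt$ pairs of different colors inside $X$, then together with the $\cnt$ implicit $(v,\text{non-}v)$ pairings one obtains a partition of $M$ into $|M|/2$ pairs of distinct colors, which precludes any color from exceeding $|M|/2$ occurrences; if the loop exhausts $X$ without finding enough such pairs, \textsc{Boyer--Moore} is invoked as a deterministic safety net. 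The main obstacle I expect is the \textsc{light} bookkeeping, specifically the identity relating $\cnt$ to the true multiplicity and the argument that monotonicity of $\cnt$ makes early termination safe; everything else reduces to routine verification once the pairing invariant is isolated as a separate lemma.
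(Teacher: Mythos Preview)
Your proposal is correct and follows essentially the same approach as the paper's own proof: both argue subroutine by subroutine, relying on the standard observation that removing a pair of differently colored balls preserves any strict majority, and both treat the \textsc{heavy} fallback via \textsc{Boyer--Moore}. You supply more detail than the paper does---in particular, you explicitly prove the pairing invariant and verify the \textsc{light} bookkeeping identity $|M|/2+\cnt=\text{multiplicity of }\ballcolor(v)$ together with the monotonicity argument for early termination---whereas the paper asserts these points in a sentence each.
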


\begin{proof}
We analyze separately every subprocedure.

$\textsc{balanced}(M)$.
If the majority exists then removing two elements with different colors preserves it.
Hence if the recursive call returns that there is no majority in $X$ then indeed there
is no majority in $M$, and otherwise $\ballcolor(v)$ is the only possible candidate
for the majority in $M$. The remaining part of the subprocedure simply verifies it.

$\textsc{heavy}(M,v)$.
The subprocedure first checks if $\ballcolor(v)$ is the majority. Hence it is enough to
analyze what happens if $\ballcolor(v)$ is not the majority. Then $X$ contains all elements
with other colors. We partition the elements in $X$ into pairs and check which of these pairs
consists of elements with different colors. If the number of elements in all the remaining
pairs is smaller than the number of elements of color $\ballcolor(v)$, then clearly we
can partition all elements in $M$ into disjoint pairs of elements with different colors, hence
indeed there is no majority. Otherwise, we revert to the simple $2n$ algorithm,
which is always correct.

$\textsc{light}(M)$.
Again, if the majority exists then removing two elements with different color preserves it.
Hence we can assume that $\ballcolor(v)$ is the only possible candidate for the
majority. Then, $Y$ consists of pairs of two elements with different
colors. From the recursive call we also know what is the frequency of $\ballcolor(v)$
in $M\setminus Y$. We iterate through the elements of $Y$ and check if their color
is $\ballcolor(v)$. However, if the color of the first element in a pair is $\ballcolor(v)$,
then the second element has a different color. So the subprocedure either correctly
determines the frequency of the majority $\ballcolor(v)$, or find sufficiently many elements
with different colors to conclude that $\ballcolor(v)$ is not the majority.
\end{proof}

\begin{theorem}
\label{lem:upperbound}
Algorithm \ref{algo:majority} w.v.h.p.\ uses at most $\frac{7}{6}n + o(n)$ comparisons on
an input of size~$n$. The expected number of comparisons is also at most $\frac{7}{6}n+o(n)$.
\end{theorem}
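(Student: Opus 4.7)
The plan is to establish the w.v.h.p.\ bound $\frac{7n}{6}+o(n)$ by induction on $n$, conditioning throughout on the high-probability events (guaranteed by Lemma~\ref{lem:sampling} and Lemma~\ref{lem:sum_of_squares}) that $|p_i - q_i| = o(\varepsilon)$ for every color and $|\sum_i p_i^2 - \sum_i q_i^2| = o(\varepsilon)$, so that the threshold tests in Algorithm~\ref{algo:majority} branch on the true frequencies up to negligible error. The histogram computation in $M'$ costs $O(|M'|^2) = O(n^{2/3}) = o(n)$ and is absorbed into the lower-order term. The cost of each subprocedure will be bounded as a function of the $p_i$'s and combined into a single recurrence $T(n)$.

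For $\textsc{heavy}(M,v_1)$ the first phase uses exactly $n$ comparisons; if $v_1$ is the majority we are done, so assume $p_1 \le \frac{1}{2}$ and we must find $k = \frac{n}{2} - p_1 n$ mismatched pairs in a random pairing of $X$ of size $(1-p_1)n$. Lemma~\ref{lem:pairs2}, applied to $X$ with its color partition, pins the density of mismatched pairs in $X$ to $1 - \sum_{i \ge 2} p_i^2/(1-p_1)^2 + o(1)$, and a draws-until argument in the spirit of Lemma~\ref{lem:drawing_until} bounds the loop in line~\ref{line:heavy_loop} by $k(1-p_1)^2/((1-p_1)^2 - \sum_{i \ge 2} p_i^2) + o(n)$ comparisons. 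Invoking the entry condition $\sum_{i \ge 2} p_i^2 \le p_1^2 - 2\varepsilon + o(\varepsilon)$ simplifies this to at most $n(1-p_1)^2/2 + o(n)$, giving a total of $n + n(1-p_1)^2/2 + o(n) \le \frac{7n}{6} + o(n)$ since $p_1 \ge \beta > 1 - 1/\sqrt{3} = \beta_1$ forces $(1-p_1)^2 \le \frac{1}{3}$.

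Both $\textsc{balanced}$ and $\textsc{light}$ spend $n/2$ on the initial pairing, recurse on the matched-pair set $X$ of size $\approx nP/2$ where $P = \sum_i p_i^2$, and process the mismatched-pair set $Y$ at an average of $2 - f/2$ comparisons per pair, where $f$ is the fraction of $Y$-pairs containing a $v$-colored ball. For $\textsc{balanced}$, the entry condition $p_1, p_2 = \frac{1}{2} \pm O(\varepsilon)$ gives $|X| \approx n/4$, $f \to 1$, and verification cost $\approx 3n/8$, yielding the tight recurrence $T(n) \le \frac{n}{2} + T\!\left(\frac{n}{4}\right) + \frac{3n}{8} + o(n)$ whose solution is exactly $\frac{7n}{6}$ (the constants sum to $\frac{12+7+9}{24} = \frac{7}{6}$). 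For $\textsc{light}$, I first observe that $p_1 > \frac{1}{2}$ would force $\sum_{i \ge 2} p_i^2 \le p_1(1-p_1) < p_1^2$ and hence entry into $\textsc{heavy}$, so the light branch only executes when $p_1 \le \frac{1}{2}$; consequently, any candidate $v$ returned by the recursion on $X$ is not the true majority of $M$, and the verification loop in line~\ref{line:light_loop} must early-terminate upon $\cnt$ hitting $0$. A draws-until argument bounds the number of pairs examined before termination, and combining with the inductive $T(|X|) \le 7|X|/6 + o(n)$ reduces the required inequality to a polynomial condition in $P$ and $p_1$ that $\beta \le \beta_2 \approx 0.47580$ (a root of $p^3 - 19p^2 - 8p + 8 = 0$) is calibrated to enforce throughout the region complementary to both the balanced and heavy entry conditions.

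Combining the three cases yields $T(n) \le \frac{7n}{6} + o(n)$ w.v.h.p.\ by induction: since each recursion is on a set of size at most $n/2$ the recursion depth is $O(\log n)$, the cumulative $o(n)$ errors telescope to $o(n)$, and a union bound over the polynomially many high-probability events preserves the w.v.h.p.\ statement. For the expectation bound, the worst-case cost of Algorithm~\ref{algo:majority} is $O(n)$ (the $\textsc{Boyer--Moore}$ fallback in $\textsc{heavy}$ costs $2n$ and every other loop is bounded by $|M|$), so the failure probability $\exp(-\Omega(\log^2 n))$ contributes only $o(n)$ to $\E[T(n)]$. The main technical obstacle is the $\textsc{light}$ case: one must carry out a multivariable optimization over the feasible region $\{p_i \ge 0,\ \sum p_i = 1,\ p_1 \le \frac{1}{2}\}$ intersected with the negations of the balanced and heavy entry conditions, and show that the early-termination savings exactly compensate for the otherwise prohibitive verification cost; this is precisely what pins down the admissible range $\beta \in (\beta_1, \beta_2)$.
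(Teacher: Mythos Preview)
Your approach mirrors the paper's: induction via $T(n) \le \tfrac{7}{6}n + C\,n^{9/10}$, sampling to branch on the frequency profile, and per-subprocedure cost analysis using the pairing and drawing lemmas. The $\textsc{heavy}$ and $\textsc{balanced}$ cases are handled correctly and match the paper almost line for line.

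For $\textsc{light}$, however, there is a genuine gap. First, your claim that $p_1 > \tfrac12$ forces entry into $\textsc{heavy}$ is incomplete: from $\sum_{i\ge 2} p_i^2 \le p_1(1-p_1) < p_1^2$ you only get $q_1^2 - \sum_{i\ge 2} q_i^2 > -o(\varepsilon)$, not the required margin of $2\varepsilon$. To obtain that margin you must also use that $\textsc{balanced}$ was \emph{not} entered, which forces $p_2 < \tfrac12 - 4\varepsilon + o(\varepsilon)$ and hence $p_1^2 - \sum_{i\ge 2}p_i^2 > 2\varepsilon - o(\varepsilon)$; even then $p_1$ may exceed $\tfrac12$ by $o(\varepsilon)$, so early termination is not guaranteed and the paper instead bounds the iteration count $I$ directly via Lemma~\ref{lem:drawing_until}. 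Second, and more importantly, you defer the entire quantitative analysis of $\textsc{light}$ to ``a polynomial condition'' that $\beta < \beta_2$ is ``calibrated to enforce''. This is the crux of the proof and the paper carries it out in full: it bounds $I \le \tfrac12(1-p_1^2-S)\frac{p_1^2-S}{(1-p_1)^2-S}\,n + o(n)$ with $S = \sum_{i\ge 2} p_i^2$, accounts separately for the $c$ iterations costing $2$ and the remaining $I-c$ costing $\tfrac32$ in expectation, shows the resulting expression is decreasing in $S$ for fixed $p_1$, and then splits into the two sub-cases $p_1^2 - S \le \varepsilon$ (giving at most $\tfrac{47}{48}n + o(n)$) versus $p_1 \le 0.45 + \varepsilon$ with $S=0$ (giving at most $1.06915n + o(n)$). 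Without actually performing this optimization your proposal does not establish the bound; the role of $\beta_2$ you invoke is only visible once this computation is done.
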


\begin{proof}
Let $T(n)$ be a random variable counting the comparisons on the given input of size $n$.
We will inductively prove that $T(n) \le \frac76n + C\cdot n^{9/10}$ for a fixed constant $C$ that is sufficiently large.
In the analysis we will repeatedly invoke Lemmas~\ref{lem:sampling}, \ref{lem:pairs}, \ref{lem:pairs2}, \ref{lem:drawing_until}, \ref{lem:sum_of_squares} and Chernoff bound to bound different quantities.
We will assume that each such the application succeeds.
Since there will be a polynomial number of applications, each on a polynomial number of elements,
this happens w.v.h.p.\ with respect to the size of the input.
We also assume that $n$ is large enough.
Algorithm~\ref{algo:majority} uses at most $\bigo(n^{2\alpha})= \bigo(n^{2/3})$ comparisons
in the sampling stage. We bound the number of subsequent comparisons used by each
subprocedure as follows.

\subparagraph*{{\normalfont $\textsc{balanced}(M)$}.}
We have that $p_1,p_2 = \frac12 \pm \bigo(\varepsilon)$. Thus also $\sum_i p_i^2 =  \frac12 \pm \bigo(\varepsilon)$. By Lemma~\ref{lem:pairs2}, $|X| = (\frac{n}{2}\sum_i p_i^2) \pm \bigo(n^{2/3} \log n)$, thus $|X| = (\frac14 \pm \bigo(\varepsilon)) n$. Also $|Y| = n - 2|X| = (\frac12 \pm \bigo(\varepsilon))n$.

List $Y$ consists of pairs of elements with different colors. Because at most $\bigo(\varepsilon n)$ of
all elements are not of color 1 or 2, there are at most $\bigo(\varepsilon n)$ pairs not
of the form $\{1,2\}$. Since the relative order of elements $Y[2i-1]$ and $Y[2i]$ is random,
for each pair $\{1,2\}$ we pay 1 with probability $1/2$ and pay 2 with probability $1/2$, and for
any other pair we pay always $2$. Thus the total cost incurred by the loop in line~\ref{line:counting1} is
(by Chernoff bound) at most
\[\bigo(\varepsilon n) \cdot 2 + \frac32 |Y|/2 + \bigo(\sqrt{|Y|}\log n) \le \frac38n \pm \bigo(\varepsilon n).\]
Thus the total cost is
\[T(n) \le T\bigl((\tfrac14+\varepsilon)n\bigr) + \tfrac{1}{2}n+\tfrac{3}{8}n+\bigo(\varepsilon n) \le \tfrac76n + \bigo(n^{9/10}) + C\cdot\bigl(\tfrac13n\bigr)^{9/10}\]
and $\frac76n + \bigo(n^{9/10})+C\cdot(\frac13)^{9/10}\cdot n^{9/10} \le \frac76n+ C \cdot n^{9/10}$ for a large enough~$C$.

\subparagraph*{{\normalfont $\textsc{heavy}(M,v)$}.}
If $p_1 > \frac12$, then we terminate in line \ref{line:majority_return} after $n$ comparisons.
Thus we can assume that $p_1\in [0.45-\varepsilon,\frac12]$.
Because by Lemmas~\ref{lem:sampling} and~\ref{lem:sum_of_squares} both
$p_1^2$ and $\sum_i p_i^2$ are estimated within an absolute error of $o(\varepsilon)$,
we have that $p_1^2 - \sum_{i\geq 2}p_i^2 \ge 2\varepsilon - 2o(\varepsilon) \ge \varepsilon$.

We argue that the loop in line~\ref{line:heavy_loop} will eventually find sufficiently many
pairs of elements with different colors, and thus return without falling back to the $2n$ algorithm.
By definition, $|X| = (1-p_1)n$ and initially $k=(1/2-p_1)n$. By Lemma~\ref{lem:pairs2},
after the random shuffle the number $D$ of pairs of elements $(X[2i-1],X[2i])$ with different colors
can be bounded by
\begin{align*}
D &\ge \frac{|X|}{2} - \frac{\sum_{j\geq 2}(p_j n)^2}{2|X|} - \bigo\bigl(|X|^{2/3} \log |X|\bigr) \ge \\
&\ge \frac{1-p_1}2 n - \frac{p_1^2-\varepsilon}{2(1-p_1)}n - o(\varepsilon n) \ge \\
&\ge \frac{1-2p_1}{2(1-p_1)} n + \frac{\varepsilon}{2}n - o(\varepsilon n)\ge \\
&\ge \bigl(\tfrac12 - p_1\bigr)n,
\end{align*}
thus indeed there are sufficiently many pairs. Hence, because the pairs are being considered in
a random order, the total cost can be bounded using Lemma~\ref{lem:drawing_until} by
\begin{align*}
T(n) &\le n+\frac{|X|}{D} \biggl(\frac12-p_1\biggr)n  +\bigo\biggl(\frac{|X|}{\sqrt{D}} \log |X|\biggr) \le \\
&\le  n + \frac{(1-p_1)^2}{2}n + \bigo\biggl(n / \sqrt{\frac{\varepsilon}3 n} \log n\biggr) \le \\
&\le (1+0.55^2/2)n + \bigo(\varepsilon n) + \bigo\biggl(\sqrt{\frac{n}{\varepsilon}}\log n\biggr) = \\
& = 1.15125 n + \bigo(n^{9/10}),
\end{align*}
where we used $D \ge \frac{\varepsilon}{2}-o(\epsilon n) \ge \frac{\varepsilon}{3}n$
for a large enough~$n$.

\subparagraph*{{\normalfont $\textsc{light}(M)$}.}
We start with bounding $|X|$ and $|Y|$. By Lemma~\ref{lem:pairs2},
$|X| = \frac{n}{2} \sum_i p_i^2 \pm \bigo(n^{2/3} \log n)$, and by Lemma~\ref{lem:pairs} there
are $\frac{n}{2} p_1^2 \pm \bigo(n^{1/2} \log n)$ elements from $A_1$ in $X$, thus there are
$n (p_1 -p_1^2) \pm \bigo(n^{1/2} \log n)$ of elements from $A_1$ in $Y$ (each paired with
a non-$A_1$ element).

We know that either $p_1 \le 0.45+\varepsilon$ or $p_1^2 - \sum_{i\geq 2}(p_i^2) \le \varepsilon$.
If there is no majority in $X$, then $p_1 \le \frac{1}{2}$ and the total cost is bounded by
\[T(n) \le \frac{n}{2} + T(|X|) \le \frac{n}{2} + \frac{7}{6} |X| + C \cdot |X|^{9/10},\]
which, because $|X| \leq \frac{n}{4}+\bigo(n^{2/3}\log n)$, is less than $\frac{19}{24}n+o(n)$.
Hence we can assume that there is a majority in $X$. In such case, $\cnt$ is set to
\[c = \frac{n}{2} \biggl(p_1^2 - \sum_{i\geq 2}p_i^2\biggr) \pm \bigo(n^{2/3} \log n).\]
We denote by $I$ the total number of iterations of the loop in line~\ref{line:light_loop}.
By Lemma~\ref{lem:drawing_until}
\[I \le \frac{\frac12 |Y|}{\frac12|Y| - |A_1 \cap Y|}  \cdot c + \bigo(E), \]
where $E = |Y| / \sqrt{\frac12|Y| - |A_1 \cap Y|}$.
Substituting $S = \sum_{i\geq 2}p_i^2$, by Lemma~\ref{lem:pairs2} we have
\begin{align*}
|Y| &= \bigl(1 - p_1^2 - S \pm \bigo(n^{-1/3}\log n)\bigr)n, \\
|Y| - 2|A_1 \cap Y| &= \bigl((1 - p_1)^2 - S \pm \bigo(n^{-1/3}\log n)\bigr)n, \\
c &=\frac12\bigl( p_1^2 - S \pm \bigo(n^{-1/3}\log n)\bigr)n.
\end{align*}
Since $p_1 \le \frac12$ and $p_2 \le \frac12-3\varepsilon$ (as for a larger $p_2$ the sampled
$q_2$ would be sufficiently large for other subprocedure to be used),
we have \[(1-p_1)^2-S - o(\varepsilon) \ge \bigl(\tfrac12\bigr)^2-\bigl(\tfrac12-3\varepsilon\bigr)^2-(3\varepsilon)^2 - o(\varepsilon) = 3\varepsilon - 18\varepsilon^2 - o(\varepsilon) \ge 2\varepsilon.\] Thus  $E \le \sqrt{\frac{n}{2\varepsilon}}$. Now, since $|Y| = \Theta(n)$ we can bound $I$ from above by
\begin{align*}
I & \le \frac{|Y|}{|Y|-2|A_1\cap Y|}\cdot \frac12(p_1^2-S)n + \bigo(1/\varepsilon)\cdot \bigo(n^{2/3}\log n) + \bigo(E) \le \\
& \le \frac{1}{2}\frac{1-p_1^2-S+\bigo(n^{-1/3}\log n)}{(1-p_1)^2-S-\bigo(n^{-1/3}\log n)}(p_1^2-S)n + \bigo\biggl(\frac{n^{2/3}\log n}{\varepsilon}\biggr) + \bigo\biggl(\sqrt{\frac{n}{4\varepsilon}}\biggr)\le \\
& \le \frac{1}{2}\frac{1-p_1^2-S}{(1-p_1)^2-S-\bigo(n^{-1/3}\log n)}(p_1^2-S)n +\frac{\bigo(n^{2/3}\log n)}{2\varepsilon} + \bigo(n^{23/30}\log n),
\end{align*}
which, because $(1-p_1)^2-S$ is sufficiently large, can be bounded by
\begin{align*}
I & \le \frac{1}{2}\frac{1-p_1^2-S}{(1-p_1)^2-S}(p_1^2-S)n\cdot \biggl(1+\frac{\bigo(n^{-1/3}\log n)}{(1-p_1)^2-S}\biggr)  + \bigo(n^{23/30}\log n) \le \\
& \le \frac{1}{2}\frac{1-p_1^2-S}{(1-p_1)^2-S}(p_1^2-S)n\cdot \biggl(1+\frac{\bigo(n^{-1/3}\log n)}{2\varepsilon}\biggr)  + \bigo(n^{23/30}\log n) \le \\
& \le \frac{1}{2}\frac{1-p_1^2-S}{(1-p_1)^2-S}(p_1^2-S)n+\frac{\bigo(n^{2/3}\log n)}{8\varepsilon^2}  + \bigo(n^{23/30}\log n) \le \\
&\le\frac{1}{2}(1-p_1^2-S)\frac{p_1^2-S}{(1-p_1)^2-S}n + \bigo(n^{26/30}\log n).
\end{align*}

For each of $c$ iterations we pay $2$, and for each of the remaining $I-c$ iterations we pay
only $\frac{3}{2}$ in expectation (for each iteration independently). Thus, by Chernoff bound
the total cost is
\begin{align*}
T(n) &\le \frac12n + T(|X|)+\frac32(I-c)+\bigo(\sqrt{I-c}\log (I-c))+2c \le \\[6pt]
& \le \frac12n + \frac{7}{12}n(p_1^2+S) + \frac{3}{4}(1-p_1^2-S)\frac{p_1^2-S}{(1-p_1)^2-S}n  +  \frac14(p_1^2-S)n + \bigo(n^{26/30}\log n)  + C \cdot |X|^{9/10} = \\[6pt]
& = \frac{n}{2}\biggl(1+\frac76(p_1^2+S)+\frac32 (p_1^2-S)\frac{1-p_1^2-S-((1-p_1)^2-S)}{(1-p_1)^2-S} + \frac32(p_1^2-S) + \frac12(p_1^2-S)\biggr)+ \bigo( n^{9/10}) =\\[6pt]
&= \frac{n}{2}\biggl(1+\frac{19}6p_1^2-\frac56S+3 (p_1^2-S)\frac{p_1-p_1^2}{(1-p_1)^2-S} \biggr)+ \bigo( n^{9/10}).
\end{align*}

We reason that, for a fixed $p_1$, the quantity \[1+\frac{19}{6}p_1^2-\frac{5}{6}S + 3 (p_1^2-S) \frac{p_1(1-p_1)}{(1-p_1)^2-S}\]  is a decreasing function of $S$, since $p_1^2 \le (1-p_1)^2$. Now we consider two cases. If $p_1^2 - S \le \varepsilon$ then  simplifying above estimation either with $p_1^2-S \le 0$, or, since $(1-p_1)^2-S \ge 2\varepsilon$, with $0 \le \frac{p_1^2-S}{(1-p_1)^2-S} \le \frac12$, we get either
\[T(n) \le \frac{n}{2}\left(1+\frac73p_1^2 \right) + \bigo(n^{9/10}), \]
which, since $p_1 \le \frac12$, is bounded from above by $\frac{19}{24}n + o(n)$, or
\[T(n) \le \frac{n}{2}\left(1+\frac73p_1^2 + \frac32p_1-\frac32p_1^2\right) + \bigo( n^{9/10}), \]
which is bounded from above by $\frac{47}{48}n+o(n)$.

Otherwise, $p_1 \le 0.45 + \varepsilon$ and substituting $S=0$ (since the cost is decreasing in $S$) we obtain
\[T(n) \le \frac{n}{2}\biggl(1+\frac{19}{6}p_1^2 + 3 \frac{p_1^3}{1-p_1}\biggr) + \bigo(n^{9/10}) = 1.06915 n + \bigo(n^{9/10}).\]

\subparagraph*{Wrapping up.}
We see that in each subprocedure, the number of comparisons is bounded by $\frac76n+C\cdot n^{9/10}$. 
Each subprocedure makes at most one recursive call, where the size of the input is reduced by at least
a factor of 2. Thus the worst-case number of comparison is always bounded by $\bigo(n)$.
Recall that the bound on the number of comparisons used by every recursive call holds
w.v.h.p.\ with respected to the size of the input to the call.
Eventually, the size of the input might become very small, and then w.v.h.p.\ with respect to the size
of the input is no longer w.v.h.p.\ with respect to the original $n$. However, as soon as this size
decreases to, say, $n^{0.1}$, the number of comparisons is $\bigo(n)$ irrespectively of the
random choices made by the algorithm. Thus w.v.h.p.\ the number of comparisons is at most
$\frac76n+\bigo(n^{9/10})$, and the expected number of comparisons is also bounded by
$\frac{7}{6}n + \bigo(n^{9/10})$.
\end{proof}

\section{Lower bound}

We consider Las Vegas algorithms. That is, the algorithm must always correctly determine whether
a majority element exists.  We will prove that the expected number of comparisons used by such
an algorithm is at least $c\cdot n-o(n)$, for some constant $c > 1$.
By Yao's principle, it is sufficient to construct a distribution on the inputs,
such that the expected number of comparisons used by any deterministic algorithms run on an input
chosen from the distribution is at least $c\cdot n-o(n)$.
The distribution is that with probability $\frac{1}{n}$ every ball has a color chosen uniformly at random
from a set of $n$ colors. With probability $1-\frac{1}{n}$ every ball is black or white,
with both possibilities equally probable. We fix a correct deterministic algorithm $\mathcal{A}$ and
analyze its behavior on an input chosen from the distribution. As a warm-up, we
first prove that $\mathcal{A}$ needs $n-o(n)$ comparisons in expectation on such input.

\subsection{A lower bound of \texorpdfstring{\boldmath $n - o(n)$}{n - o(n)}}

In every step $\mathcal{A}$ compares two balls, thus we can describe its current knowledge by defining
an appropriate graph as follows. Every node corresponds to a ball. Two nodes are connected with
a \emph{negative edge} if the corresponding balls have been compared and found out to have
different colors. Two nodes
are connected with a \emph{positive edge} if the corresponding balls are known to have the same
colors under the assumption that every ball is either black or white (either because they have been
directly compared and found to have the same color, or because such knowledge has been indirectly
inferred from the assumption).
After every step of the algorithm the graph consists of a number of components $C_1,C_2,\ldots$.
Every components is partitioned into two parts $C_i = A _i \cupdot B_i$, such that both $A_i$ and
$B_i$ are connected components in the graph containing only positive edges and there is at least one
(possibly more than one) negative edge between $A_i$ and $B_i$. There are no other edges in the graph.
Now we describe how the graph changes after $\mathcal{A}$ compares two balls $x\in C_i$ and $y\in C_j$
assuming that every ball is either black or white. 
If $i=j$ then the result of the comparison is already determined by the previous comparisons and
the graph does not change. Otherwise, $i\neq j$ and assume by symmetry that $x\in A_i, y\in A_j$.
The following two possibilities are equally probable:
\begin{enumerate}
\item $\ballcolor(x) = \ballcolor(y)$, then we merge both components into a new component
$C = A \cupdot B$, where $A = A_i \cupdot A_j$ and $B = B_i \cupdot B_j$ by creating new
positive edges $(x,y)$ and $(x',y')$ for some $x'\in B_i, y'\in B_j$ (if $B_i ,B_j\neq\emptyset$).
\item $\ballcolor(x) \neq \ballcolor(y)$,  then we merge both components into a new component
$C = A  \cupdot B$, where $A = A_i \cupdot B_j$ and $B = B_i \cupdot A_j$ by creating 
new positive edges $(x,y')$ for some $y'\in B_j$ (if $B_j\neq \emptyset$) and
$(x',y)$ for some $x'\in B_i$ (if $B_i\neq\emptyset$). We also create a new
negative edge $(x,y)$. Here we
crucially use the assumption that every ball is either black or white.
\end{enumerate}
The graph exactly captures the knowledge of $\mathcal{A}$ about a binary input. 

Any binary input contains a majority and $\mathcal{A}$ must report so. However, because with
very small probability the input is arbitrary, this requires some work due to the following
lemma.

\begin{lemma}
\label{lem:terminate}
If $\mathcal{A}$ reports that a binary input contains a majority element, then the graph contains a
component $C = A \cupdot B$ such that $|A| > \frac{n}{2}$ or $|B| > \frac{n}{2}$.
\end{lemma}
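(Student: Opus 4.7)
The plan is to argue by contradiction. Suppose $\mathcal{A}$ has reported that a binary input contains a majority element, but that every component $C_i = A_i \cupdot B_i$ of the graph satisfies $|A_i|, |B_i| \le n/2$. Since $\mathcal{A}$ is correct on every input and its sequence of comparisons depends only on the answers it has already received (not on the hidden colors), the output it produces must in fact be correct for every input consistent with those answers, including non-binary ones. I will exhibit such a consistent input in which no color occurs more than $n/2$ times, contradicting the reported majority.

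For the construction I would deliberately ignore all inferred positive edges and respect only the directly observed ones. Let $P_1, P_2, \ldots$ be the connected components of the subgraph consisting solely of positive edges that came from actual equality comparisons, and assign a fresh distinct color to each $P_j$. Every observed equality lies inside some $P_j$ and is respected; every observed inequality links two different $P_j$'s, which by construction receive different colors and is also respected. So the coloring is consistent with the full sequence of comparisons performed by $\mathcal{A}$.

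To close the argument I would note that each direct-positive class $P_j$ is contained in some $A_i$ or $B_i$, because inferred positive edges only merge direct-positive classes, never split them. Under the standing hypothesis $|A_i|, |B_i| \le n/2$ this forces $|P_j| \le n/2$ for every $j$, so in the constructed input no color appears more than $n/2$ times and there is no majority -- contradicting what $\mathcal{A}$ just reported. The only delicate point, and the one I would highlight as the real content, is keeping the two notions of positive edge straight: inferences of the form ``$a \ne b$ and $b \ne c$, therefore $a = c$'' are valid under the binary hypothesis but can be broken in the general model by handing $a$ and $c$ fresh distinct colors, and that is precisely the slack the construction exploits.
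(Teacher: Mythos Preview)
Your proof is correct and follows the same contradiction strategy as the paper: build a non-binary input consistent with all observed comparisons in which no color class exceeds $n/2$. The one difference is in the granularity of the coloring. The paper colors each part $A_i$ and $B_i$ with its own fresh color $c_{A_i}$, $c_{B_i}$; since every equality comparison lies inside some $A_i$ or $B_i$ and every inequality comparison crosses between an $A_i$ and its $B_i$, consistency is immediate, and $|A_i|,|B_i|\le n/2$ gives the contradiction directly. You instead refine further to the direct-positive classes $P_j$ and then argue $P_j\subseteq A_i$ or $P_j\subseteq B_i$ to inherit the size bound. That extra refinement is valid but unnecessary: the inferred positive edges never need to be ``broken'' at all, because coloring each $A_i$ and $B_i$ monochromatically already respects every actual comparison. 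Your emphasis on distinguishing direct from inferred positive edges is the right instinct, but the paper sidesteps it by observing that the coarser classes already suffice.
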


\begin{proof}
Assume otherwise, that is, $\mathcal{A}$ reports that a binary input contains a majority element
even though both parts of every component are of size less than $\frac{n}{2}$. Construct another
input by choosing, for every component $C = A \cupdot B$, two fresh colors $c_A$ and $c_B$ and
setting $\ballcolor(x) = c_A$ for every $x\in A$, $\ballcolor(y) = c_B$ for every $y \in B$.
Every comparison performed by $\mathcal{A}$ is an edge of the graph, so its behavior on the new input
is exactly the same as on the original binary input. Hence $\mathcal{A}$ reports that there is
a majority element, while the frequency of every color in the new input is less than $\frac{n}{2}$,
which is a contradiction.
\end{proof}

From now on we consider only binary inputs. If we can prove that the expected number of comparisons
used by $\mathcal{A}$ on such input is $n-o(n)$, then the expected number of comparisons 
on an input chosen from our distribution is also $n-o(n)$.
Because every comparison decreases the number of components by one, it is sufficient to argue
that the expected size of some component when $\mathcal{A}$ reports that there is a majority
is $n-o(n)$. We already know that there must exist a component $C = A \cupdot B$ such that
(by symmetry) $|A| > n/2$. We will argue that $|B|$ must also be large.
To this end, define \emph{balance} of a component $C_i = A_i \cupdot B_i$ as
$\balance(C_i) = (|A_i|-|B_i|)^2$, and the total balance as $\sum_i \balance(C_i)$.
By considering the situation before and after a single comparison, we obtain the following.

\begin{lemma}
\label{lem:sum}
The expected total balance at termination of algorithm $\mathcal{A}$ is $n$.
\end{lemma}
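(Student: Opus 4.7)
The plan is to show that the total balance forms a martingale with respect to the filtration generated by the sequence of comparisons, starting from initial value $n$, and then invoke optional stopping.

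First I would set up the initial condition. Before any comparison is made, each component is a singleton $C_i = A_i \cupdot B_i$ with $|A_i| = 1$ and $|B_i| = 0$ (the choice of labeling does not matter since $\balance(C_i) = (|A_i|-|B_i|)^2$), so $\balance(C_i) = 1$ and the total balance equals $n$.

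Next I would analyze how a single comparison changes the total balance. If $\mathcal{A}$ compares $x, y$ lying in the same component, the outcome is deterministic and the graph (hence the balance) does not change. Otherwise $x \in C_i$ and $y \in C_j$ with $i \ne j$; by symmetry assume $x \in A_i$ and $y \in A_j$, and write $u = |A_i|-|B_i|$ and $v = |A_j|-|B_j|$. With probability $\tfrac{1}{2}$ the colors agree and the merged component has parts $A_i \cupdot A_j$ and $B_i \cupdot B_j$, whose balance is $(u+v)^2$; with probability $\tfrac{1}{2}$ the colors differ and the merged component has parts $A_i \cupdot B_j$ and $B_i \cupdot A_j$, whose balance is $(u-v)^2$. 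Since all other components are untouched, the expected change in total balance is
\[\frac{1}{2}(u+v)^2 + \frac{1}{2}(u-v)^2 - (u^2+v^2) = 0.\]
The case $x \in A_i$, $y \in B_j$ is handled identically; the two outcomes simply swap roles. Hence the total balance is a martingale.

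Finally I would apply optional stopping. The total balance is uniformly bounded by $n^2$ (since $\sum_i (|A_i|-|B_i|)^2 \le \sum_i (|A_i|+|B_i|)^2 \le n^2$), and $\mathcal{A}$ terminates with probability one on a binary input (otherwise its expected comparison count would be infinite and the lower bound would be trivial), so the bounded-martingale version of the optional stopping theorem yields that the expected total balance at termination equals its initial value, $n$. The only mild subtlety is verifying that the algorithm's stopping time is almost surely finite and that the martingale is bounded at all times, but both facts are immediate from the uniform bound on the balance and from $\mathcal{A}$ being a correct (hence always-terminating) Las Vegas algorithm, so this step is essentially routine.
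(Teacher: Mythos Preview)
Your argument is correct and follows essentially the same route as the paper: establish the initial value $n$, show that each merge leaves the total balance unchanged in expectation via the identity $\tfrac12(u+v)^2+\tfrac12(u-v)^2=u^2+v^2$, and conclude that total balance is a martingale. You are slightly more explicit than the paper in invoking the bounded optional stopping theorem to pass from ``martingale'' to ``expected terminal value equals $n$,'' which is a welcome addition but not a different idea.
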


\begin{proof}
In the very beginning the total balance is $n$ because every component is a singleton. Recall that
when $\mathcal{A}$ compares two balls $x\in C_i$ and $y\in C_j$ then $C_i$ and $C_j$ are
merged into a new component $C$. It is easy to verify that if $\balance(C_i)=b^2_i$ and
$\balance(C_j)=b^2_j$ then either $\balance(C) = (b_i + b_j)^2 $ or $\balance(C) = (b_i - b_j)^2$,
with both possibilities equally probable. Hence the expected total balance after the comparison
is equal to the previous total balance increased by
$ -b^2_i - b^2_j + \frac12((b_i - b_j)^2 + (b_i + b_j)^2) = 0$. 
Consequently, total balance is a martingale and is preserved in expectation with respect to arbitrarily branching computation.
\end{proof}

Total balance when $\mathcal{A}$ reports a majority is a random variable with expected value $n$.
By Markov's inequality, with probability $1-1/n^{1/3}$ its value is at most $n^{4/3}$, which
implies that for any component $C_i = A_i \cupdot B_i$, we have $\balance(C_i) \leq n^{4/3}$. If we apply this inequality to the component $C = A \cupdot B$ with $|A| > n/2$, we obtain $|B| \geq n/2 - n^{2/3}$.
Hence with probability $1-1/n^{1/3}$ there is a component with at least $n-n^{2/3}$
nodes,
which means that the algorithm must have performed at least $n-n^{2/3} - 1$ comparisons.
Therefore the expected number of comparisons is at least $(1-1/n^{1/3})(n-n^{2/3}-1) = n - o(n)$.

\subsection{A stronger lower bound}
\label{section:stronger_lowerbound}

To obtain a stronger lower bound, we extend the definition of the graph that captures the current knowledge
of~$\mathcal{A}$. Now a positive edge can be \emph{verified} or \emph{non-verified}. A~verified
positive edge $(x,y)$ is created only after comparing two balls $x$ and $y$ such that $\ballcolor(x)=\ballcolor(y)$.
All other positive edges are non-verified. The algorithm can also turn a non-verified positive edge
$(x,y)$ into a verified positive edge by comparing $x$ and $y$. By the same reasoning as in
Lemma~\ref{lem:terminate} we obtain the following.

\begin{lemma}
\label{lem:terminate2}
If $\mathcal{A}$ reports that a binary input contains a majority element, then the graph consisting
of all verified positive edges contains a connected component with at least $\frac{n}{2}$ nodes.
\end{lemma}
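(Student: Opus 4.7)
The plan is to mimic the proof of Lemma~\ref{lem:terminate}: assume for contradiction that every connected component of the subgraph of verified positive edges has fewer than $\frac{n}{2}$ nodes, and build an alternate (not necessarily binary) input on which $\mathcal{A}$ performs exactly the same sequence of comparisons with the same outcomes, yet which has no majority color. Since $\mathcal{A}$ must be correct on \emph{every} input, this will yield the desired contradiction.

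Concretely, let $V_1, V_2, \ldots$ denote the connected components of the graph formed solely by the verified positive edges, and define a new input by picking a fresh color $c_k$ for each $V_k$ and setting $\ballcolor(x) = c_k$ for every $x \in V_k$. I would then verify that each comparison actually performed by $\mathcal{A}$ returns the same value under the new coloring as under the original binary input. The comparisons performed by $\mathcal{A}$ are in one-to-one correspondence with the verified positive edges (which returned equal) and the negative edges (which returned unequal); the non-verified positive edges were never tested, they are only inferences from the binary promise. Verified positive edges lie entirely inside some $V_k$, hence still return equal. For a negative edge $(x,y)$ I would argue that $x$ and $y$ must lie in distinct $V_k$'s: in the original binary input $\ballcolor(x) \neq \ballcolor(y)$, while any two balls connected by a path of verified positive edges truthfully have the same color in that input, so $x$ and $y$ cannot be joined by such a path; thus under the new coloring the comparison again returns unequal.

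Once this matching of outcomes is checked, $\mathcal{A}$ runs identically on the two inputs and reports a majority on the new one as well, even though every color class has size $|V_k| < \frac{n}{2}$, contradicting the correctness of $\mathcal{A}$. The only subtle point, and really the only reason the lemma is worth stating separately from Lemma~\ref{lem:terminate}, is the observation that the non-verified positive edges need \emph{not} be respected by the alternate coloring: they are deductions made under the binary promise, not actual comparison outcomes, so violating them simply produces an input that breaks the promise, which is exactly what we need in order to strengthen the conclusion from ``both sides of a single two-part component are small'' to ``every verified-positive component is small''.
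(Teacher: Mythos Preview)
Your proposal is correct and is exactly the argument the paper intends: it says only ``by the same reasoning as in Lemma~\ref{lem:terminate}'' and gives no separate proof, and you have fleshed out precisely that reasoning, including the key observation that non-verified positive edges are inferences from the binary promise rather than actual comparison outcomes and therefore need not be respected by the adversarial recoloring. One tiny wording quibble: the comparisons are not literally in one-to-one correspondence with the edges (a comparison can be redundant), but your actual argument only uses that every comparison's outcome is preserved under the new coloring, which you verify correctly.
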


Now the goal is to construct a large component in the graph that consists of all verified positive
edges, so it makes sense for $\mathcal{A}$ to compare two balls from the same component.
However, without loss of generality, such comparisons are executed after having identified a
large component in the graph consisting of all positive edges. Then, $\mathcal{A}$ asks sufficiently many
queries of the form $(x,y)$, where $(x,y)$ is a non-verified edge from the identified component.
In other words, $\mathcal{A}$ first isolates a candidate for a majority, and then makes sure that
all inferred equalities really hold, which is necessary because with very small probability the input
is not binary. This allows us to bound the total number of comparisons from below as follows. We define that a \emph{majority edge} is an edge between two nodes of the majority color.

\begin{lemma}
\label{lem:verify}
The expected number of comparisons used by $\mathcal{A}$ on a binary input is at least
$n-o(n)$ plus the expected number of non-verified majority edges.
\end{lemma}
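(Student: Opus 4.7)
The plan is to refine the warm-up argument by splitting each comparison of $\mathcal{A}$ into one of two types and bounding the contribution of each. Every comparison is either a \emph{merge} between balls in different components of the current positive-plus-negative graph, which decreases the component count by exactly one, or a \emph{verification} between the endpoints of a non-verified positive edge in the same component (without loss of generality, as argued just before the lemma statement), which promotes that edge to a verified one. Write $M$ and $V$ for the numbers of these two types of comparisons; the total number of comparisons is $M + V$.

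First, I would recycle the warm-up argument to bound $M$. Each merge decreases the number of positive-plus-negative components by exactly one, so $M = n - C_T$, where $C_T$ is the number of components at termination. By the martingale argument of Lemma~\ref{lem:sum} combined with Markov's inequality and Lemma~\ref{lem:terminate}, the largest component has size $n - o(n)$ with probability $1 - n^{-1/3}$, forcing $C_T = o(n)$ and hence $E[M] \geq n - o(n)$.

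Second, I would argue that $E[V]$ is at least the expected number $E[U_{\text{maj}}]$ of non-verified majority edges at termination. The key ingredients are: (i) a case analysis of the four merge types (equality maj-maj, equality min-min, and two subcases of inequality), which shows that each merge creates at most one new non-verified majority edge and that every merge which joins two majority-color positive components adds exactly one new positive edge (verified or non-verified) to the combined component, so every majority-color positive component is a tree; and (ii) Lemma~\ref{lem:terminate2}, which forces a verified sub-tree of size at least $n/2$ inside the large majority-color positive component $P$, so that the $|P|-1$ edges of $P$ are constrained in how many may remain non-verified. Bookkeeping the verified and non-verified edges of $P$ against equality merges and verifications in $P$, and using that removing each non-verified edge fragments the verified sub-forest into one additional piece, yields the desired lower bound on $V$.

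Combining the two parts, $E[T] = E[M] + E[V] \geq n - o(n) + E[U_{\text{maj}}]$, which is the claim. The main obstacle is making the combinatorial bookkeeping in the second step fully precise: one must carefully combine the counts of equality merges within $P$, verifications within $P$, and the size of $P$, and also treat non-verified majority edges that may lie in smaller majority-color positive components outside $P$ (whose total contribution must also be bounded by the extra verification budget). Once that bookkeeping is done, the sum of the two parts yields the stated inequality.
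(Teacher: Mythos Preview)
Your $M+V$ decomposition and the bound $\E[M]\ge n-o(n)$ are correct and match the paper. The second part, however, is aimed at the wrong quantity. In the lemma, ``the expected number of non-verified majority edges'' means the total number of such edges \emph{created} during the run (the paper says so explicitly just after the lemma), not the number \emph{remaining} at termination. Your $U_{\text{maj}}$, defined as the number remaining, is small; proving $V\ge U_{\text{maj}}$ would give nothing.

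The inequality you need runs the other way. Let $N$ be the number of non-verified majority edges ever created and $R$ the number still non-verified at termination; each verification of a majority edge removes one, so $V\ge N-R$, and it suffices to show $R=o(n)$ with high probability. The paper does this with two ingredients you did not invoke: by Chernoff on a uniform binary input, $|A|\le n/2+O(\sqrt{n\log n})$ with probability $1-1/n$; and by the balance argument, at most $n^{2/3}$ majority-colored nodes lie outside $A$. Since $A$ is a tree with $|A|-1$ edges and Lemma~\ref{lem:terminate2} forces a verified subtree of size $\ge n/2$, at most $|A|-n/2=O(\sqrt{n\log n})$ edges of $A$ can stay non-verified, and outside $A$ there are at most $n^{2/3}$ majority edges. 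Hence $R=o(n)$ and $V\ge N-o(n)$. Your merge-type case analysis and the bookkeeping inside $P$ are not needed for this lemma (that style of argument belongs to the next lemma, which lower-bounds $N$); the missing idea here is the Chernoff upper bound on $|A|$, without which one cannot conclude that almost every edge of $A$ must eventually be verified.
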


\begin{proof}
Recall that if there exists a component $C=A\cupdot B$ with $|A|>n/2$
then with probability $1-1/n^{1/3}$ we also have $|B| \geq n/2-n^{2/3}$.
Set $A$ consists of nodes of the majority color, although possibly not all nodes of the majority color are there. 
However, because $B$ is large, there are at most $n^{2/3}$ nodes of the majority color
outside of $A$. Also, because we consider binary inputs chosen uniformly at random,
by Chernoff bound $|A| \leq n/2+\bigo(\sqrt{n\log n})$ with probability $1-1/n$.

The expected number of comparisons used by $\mathcal{A}$ to construct
a component $C=A\cupdot B$ such that $|A|> n/2$ is at least $n-n^{2/3}-1$.
Then, $\mathcal{A}$ needs to verify sufficiently many non-verified edges inside $A$ to obtain
a connected component of size $n/2$ in the graph that consists of verified positive
edges. By construction, there are no cycles in the graph that consists of positive edges.
Hence with probability $1-1/{n^{1/3}}-1/n$ there will be no more than
$n^{2/3}+\bigo(\sqrt{n\log n})$ non-verified positive edges between nodes outside of $B$
when $\mathcal{A}$ reports a majority. Consequently, the additional verification cost
is the expected number of non-verified majority edges minus
$n^{2/3}+\bigo(\sqrt{n\log n})=o(n)$.
\end{proof}

In the remaining part of this section we analyze the expected number of non-verified
majority edges constructed during the execution of the algorithm.
We show that this is at least $(c-1)n-o(n)$ for some $c > 1$. Then, Lemma~\ref{lem:verify} implies
the claimed lower bound.

A component $C=A\cupdot B$ is called \emph{monochromatic} when $A=\emptyset$ or
$B=\emptyset$ (by symmetry, we will assume the latter) and \emph{dichromatic} otherwise.
With probability $1-1/n^{1/3}$, when $\mathcal{A}$ reports a majority there is
one large dichromatic component with at least $n-n^{2/3}$ nodes, and hence the total number of
components is at most $n^{2/3}+1$. It is convenient to interpret the execution of $\mathcal{A}$
as a process of eliminating components by merging two components into one.
Each such merge might create a new non-verified edge. We define that the \emph{cost} of such a non-verified edge is the probability that it is a majority edge.
We want to argue that because all but $n^{2/3}$ components will be eventually eliminated,
the total cost of all non-verified edges that we create is $(c-1)n-o(n)$.

We analyze in more detail the merging process in terms of mono- and dichromatic components.
Let $\predict_k$ be the random variable denoting the probability that, after $k$ steps  of $\mathcal{A}$, a node from the larger
part of a component is of the majority color. It is rather difficult to calculate $\predict_k$
exactly, so we will use a crude upper bound instead. An important property of the
upper bound will be that it is nondecreasing in $k$.
When $\mathcal{A}$ compares two balls $x\in C_i$ and $y\in C_j$ with $i\neq j$ to obtain
a new component $C=A\cupdot B$ there are three possible cases:
\begin{enumerate}
\item $C_i$ and $C_j$ are monochromatic. Then with probability $\frac{1}{2}$ the new
component $C$ is also monochromatic, and with probability $\frac{1}{2}$ it is dichromatic.
\item $C_i$ is dichromatic and $C_j$ is monochromatic. The new component is dichromatic.
With probability $\frac{1}{2}$ we have a new non-verified edge, and with probability at least $\frac{1}{2}(1-\predict_k)$ we have a new non-verified majority edge.
\item $C_i$ and $C_j$ are dichromatic. Then with probability $\frac{1}{2}$ we create a new non-verified
edge inside both $A$ and $B$, and one of them is a majority edge.
\end{enumerate}

We analyze the expected total cost of all non-verified edges when only one component remains.
When $\mathcal{A}$ reports a majority up to $n^{2/3}$ components might remain, but this
changes only the lower order terms of the bound.

\begin{lemma}
\label{lem:nonverified}
The expected total cost of all non-verified edges when only one component remains is at least
$\sum_{k=1}^{2n/3} \E\bigl[\min\bigl(\tfrac{1}{6},\tfrac{1}{2}(1-\predict_k)\bigr)\bigr]$.
\end{lemma}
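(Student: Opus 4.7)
The plan is an amortized analysis. I would introduce the potential $\Phi_k = \tfrac{1}{3} d_k$, where $d_k$ is the number of dichromatic components right after merge $k$; since $d_0 = 0$ and the process ends with a single component, $\Phi_0 = 0$ and $\Phi_{\text{end}} \in \{0, 1/3\}$. Defining the amortized cost $\tilde c_k := (\text{expected cost of non-verified edges appearing at step } k) + (\Phi_k - \Phi_{k-1})$, telescoping gives $\sum_{k=1}^{n-1} \tilde c_k = (\text{total non-verified-edge cost}) + \Phi_{\text{end}} - \Phi_0$, which differs from the total non-verified-edge cost by an additive $O(1)$.

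The core step is lower-bounding $\E[\tilde c_k \mid \mathcal{F}_{k-1}]$ for each of the three merge types identified earlier. In Case 1 (mono-mono) both $B_i$ and $B_j$ are empty, so no new non-verified edge appears, and the merged component becomes monochromatic or dichromatic each with probability $\tfrac{1}{2}$; hence $\E[\tilde c_k \mid \mathcal{F}_{k-1}] = 0 + \tfrac{1}{2}\cdot\tfrac{1}{3} = \tfrac{1}{6}$. In Case 2 (dic-mono) the prior case analysis already yields expected new non-verified majority edges at least $\tfrac{1}{2}(1-\predict_k)$ and $\Delta\Phi = 0$, so $\E[\tilde c_k \mid \mathcal{F}_{k-1}] \geq \tfrac{1}{2}(1-\predict_k)$. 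In Case 3 (dic-dic), the color-mismatch subcase produces exactly one non-verified majority edge, because the two sides of the merged component are assigned opposite binary colors and exactly one is the majority color, so expected new non-verified majority edges are at least $\tfrac{1}{2}$; combined with $\Delta\Phi = -\tfrac{1}{3}$ this gives $\E[\tilde c_k \mid \mathcal{F}_{k-1}] \geq \tfrac{1}{6}$. In every case the amortized cost is at least $\min(\tfrac{1}{6}, \tfrac{1}{2}(1-\predict_k))$.

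Summing over $k = 1, \ldots, n-1$, taking expectations via the tower property, and then dropping the non-negative tail for $k > 2n/3$ yields
\[
\E[\text{total cost}] \;\geq\; \sum_{k=1}^{n-1}\E\bigl[\min\bigl(\tfrac{1}{6}, \tfrac{1}{2}(1-\predict_k)\bigr)\bigr] - O(1) \;\geq\; \sum_{k=1}^{2n/3}\E\bigl[\min\bigl(\tfrac{1}{6}, \tfrac{1}{2}(1-\predict_k)\bigr)\bigr] - O(1),
\]
where the additive $O(1)$ from $\Phi_{\text{end}}$ is absorbed into the lower-order contributions of the overall bound (which in any case includes the $n^{2/3}$ slack from components possibly remaining when $\mathcal{A}$ terminates). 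The main obstacle I foresee is the Case 2 bound, which requires justifying that the single non-verified edge produced in the color-mismatch subcase is a majority edge with probability at least $1-\predict_k$; this uses the defining property of $\predict_k$ as a crude, nondecreasing upper bound on the probability that a node from the larger part of a component carries the majority color, so that the opposite, smaller-part color is the majority with probability at least $1-\predict_k$.
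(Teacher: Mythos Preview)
Your proposal is correct and follows essentially the same amortized argument as the paper. The only difference is cosmetic: the paper assigns credits of $\tfrac{1}{6}$ to each monochromatic and $\tfrac{1}{2}$ to each dichromatic component, whereas you use the potential $\Phi_k=\tfrac{1}{3}d_k$. Since the paper's total credit equals $\tfrac{1}{6}\cdot(\text{number of components})+\tfrac{1}{3}d_k$ and the number of components decreases by exactly one at each merge, the two potentials differ by a deterministic $\tfrac{1}{6}$ per step; this is why your per-step lower bound is the paper's bound shifted by $+\tfrac{1}{6}$ (giving $\min(\tfrac{1}{6},\tfrac{1}{2}(1-\predict_k))$ directly), while the paper carries the initial $\tfrac{n}{6}$ separately and only combines it with the sum at the end. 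Both routes incur the same $O(1)$ boundary term from the final surviving component, which, as you note, is absorbed into the $o(n)$ slack of the overall lower bound.
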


\begin{proof}
We start with $n$ components and need to eliminate all but at most one
of them. To each component we associate credit, $\frac{1}{2}$ to each dichromatic and $\frac{1}{6}$ to each monochromatic one. The algorithm can collect the credit from both of the components it merges, but it has to pay for credit of newly created one. Additionally algorithm has to pay for any non-verified majority edge created by merging.

In every step we have three possibilities:
\begin{enumerate}
\item Merge two monochromatic components into one. With probability $\frac{1}{2}$ the new component is dichromatic, and with probability $\frac{1}{2}$ the new component is monochromatic. Thus the expected amortized cost for this step is $0$.
\item Merge a monochromatic components with a dichromatic component. Then the total
number of monochromatic components decreases by $1$ and we add with probability at least $\frac{1}{2}(1-\predict_k)$ a non-verified majority edge. The expected amortized cost for this step is $\frac{1}{2}(1-\predict_k) - \frac16$.
\item Merge two dichromatic components while adding with probability $\frac{1}{2}$ a non-verified majority edge. The expected amortized cost for this step is $0$.
\end{enumerate}
In total algorithm has to pay for initial credits and for each step, making the total expected cost at least
\[\frac{n}6 + \sum_{k=1}^{n-1} \E\bigl[\min\bigl(0,\tfrac{1}{2}(1-\predict_k) - \frac16\bigr)\bigr] \ge  \sum_{k=1}^{2/3n} \E\bigl[\min\bigl(\tfrac{1}{6},\tfrac{1}{2}(1-\predict_k)\bigr) \bigr].  \qedhere \]
\end{proof}
We note that by truncating the sum at $\frac23n$ we do not lose any cost estimation, as for $k \ge \frac23n$ our estimation for $\predict_k$ gives $1$. 

Now we focus deriving an upper bound for the expression obtained in Lemma~\ref{lem:nonverified}.
To bound $\predict_k$ we use an approach due to Christofides~\cite{Christofides}.
At any given step $k$ we will look at all components with a nonzero balance. Specifically, we introduce two new random variables: $M_k$ being the largest balance of a component, and $N_k$ being the number of components with a nonzero balance. Since at each step, $N_k$ is decreased in expectation at most by $\frac32$, we have
$\E[N_k] \ge n - \frac32(k-1)$,
and w.v.h.p., by Chernoff bound
$N_k \ge n - \frac32k - \bigo(\sqrt{k} \log n)$.

Since by Lemma~\ref{lem:sum} the expected sum of balances is $n$, and each nonzero component contributes at least $1$ to the sum, we have
$\E[M_k] \le n - \E[N_k-1] = \frac32k-\frac12$.

Now to proceed, for a component $C_i = A_i \cupdot B_i$ we define $\delta_i = \left||A_i| - |B_i|\right|$, a positive value such that $\delta_i^2 = \balance(C_i)$. Thus, at any given step $k$, the algorithm observes the nonzero values $\delta_1,\delta_2,\ldots,\delta_{N_k}$. Without loss of generality we can narrow our focus on a component~$C_1$. We are interested in bounding the probability
\[ \Pr(A_1 \text{ in majority}) = \Pr(  \delta_1 + \varepsilon_2 \delta_2 \ldots + \varepsilon_{N_k} \delta_{N_k} \ge 0 ) = \tfrac12 + \tfrac12 \Pr( \varepsilon_2 \delta_2 \ldots + \varepsilon_{N_k} \delta_{N_k} \in [-\delta_1,\delta_1]),\]
where $\epsilon_2,\epsilon_3,\ldots,\epsilon_{N_k} \in \{-1,1\}$ are drawn independently and uniformly at random.
By a result of Erd\H{o}s~\cite{erdos1945}, if $\delta_2, \ldots, \delta_{N_k} \ge 1$ then the above is maximized for $\delta_2=\ldots=\delta_{N_k}=1$.

We now approximate binomial distribution using the symmetric case of de Moivre--Laplace Theorem. Recall that 
\[\Phi(x) = \frac{1}{\sqrt{2\pi}} \int_{-\infty}^x \! e^{-t^2/2} \, \mathrm{d}t\]
is the \emph{cumulative distribution function} of the \emph{normal distribution}.
\begin{theorem}[De Moivre--Laplace]
Let $S_n$ be the number of successes in $n$ independent coin flips. Then
\[ \Pr\Bigl(\frac{n}{2} + x_1\sqrt{n} \le S_n \le \frac{n}{2} + x_2 \sqrt{n}\Bigr) \sim \Phi(2x_2) - \Phi(2x_1). \]
\end{theorem}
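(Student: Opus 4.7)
The plan is to establish this classical local-limit result by applying Stirling's approximation to the point probabilities of the symmetric binomial and then recognizing the resulting Riemann sum as the Gaussian integral.

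First, I would use the identity $\Pr(S_n = k) = \binom{n}{k} 2^{-n}$ together with Stirling's formula $m! = \sqrt{2\pi m}\,(m/e)^m\,(1 + O(1/m))$. Writing $k = n/2 + x\sqrt{n}$ and expanding the logarithms of $n!$, $k!$, $(n-k)!$ to second order around $k = n/2$, the linear terms cancel by symmetry and the quadratic terms collapse to
\[ \Pr\bigl(S_n = n/2 + x\sqrt{n}\bigr) = \sqrt{\tfrac{2}{\pi n}}\, e^{-2x^2}\,\bigl(1 + o(1)\bigr), \]
uniformly for $x$ in any bounded interval (the only range relevant here).

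Next, I would write the probability in the theorem as the sum of these point probabilities over integers $k \in [n/2 + x_1\sqrt{n},\, n/2 + x_2\sqrt{n}]$. Consecutive values of $x = (k - n/2)/\sqrt{n}$ differ by exactly $1/\sqrt{n}$, so the sum is a Riemann sum with mesh $1/\sqrt{n}$ for the continuous integrand $\sqrt{2/\pi}\,e^{-2x^2}$. As $n \to \infty$, it converges to $\int_{x_1}^{x_2} \sqrt{2/\pi}\,e^{-2x^2}\,dx$, and the substitution $u = 2x$ rewrites this as $\int_{2x_1}^{2x_2} \tfrac{1}{\sqrt{2\pi}}e^{-u^2/2}\,du = \Phi(2x_2) - \Phi(2x_1)$, as claimed.

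The main obstacle, and the reason this is typically treated as a black box, is securing the uniformity in $x$ needed for the Riemann sum to match the integral. One needs an error bound of the form $|\log \Pr(S_n = k) + \tfrac{1}{2}\log(\pi n/2) + 2x^2| = o(1)$ uniformly over $|x| \le \max(|x_1|,|x_2|)$, which follows from the explicit remainder in Stirling's formula but requires careful bookkeeping of the cross-cancellations among the three factorials. Since this is the classical De Moivre--Laplace theorem, I would ultimately just cite a standard reference rather than reproduce the uniform local limit calculation in full.
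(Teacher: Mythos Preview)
Your sketch is the standard proof of the classical De Moivre--Laplace theorem and is correct. However, the paper does not prove this statement at all: it merely states the theorem as a named classical result and uses it as a black box in the lower-bound analysis of Section~\ref{section:stronger_lowerbound}. So there is nothing to compare against; your final remark that one would ``ultimately just cite a standard reference'' is exactly what the paper does.
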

In our case we are interested in $N_k-1$ coin flips and the number of successes in the range $[(N_k-1)/2-\delta_1/2,\,(N_k-1)/2+\delta_1/2]$. Thus probability that 1 is the majority can be bounded from above by
\[ \Pr(A_1 \text{ is the majority}) \le \frac12\biggl(\Phi\biggl(\frac{\delta_1}{\sqrt{N_k-1}}\biggr) - \Phi\biggl(-\frac{\delta_1}{\sqrt{N_k-1}}\biggr)\biggr)+\frac12 = \Phi\biggl(\frac{\delta_1}{\sqrt{N_k-1}}\biggr).\]
Because $M_k$ is the largest balance of a component, $\delta_1,\delta_2,\ldots,\delta_{N_k}$ are bounded from above by $\sqrt{M_k}$. Additionally, w.v.h.p.\ $N_k \ge n-\frac32k - \bigo(\sqrt{n} \log n)$, thus
\[\predict_k \le \Phi\left(\sqrt{\frac{M_k}{n-\frac32k - \bigo(\sqrt{n} \log n)}}\right).\]

Since $\Phi(\sqrt{x}/\text{const})$ is a concave function, we can apply expected value, and get
\[ \E[\predict_k] \le \Phi\left(\sqrt{\frac{\E[M_k]}{n-\frac32k - \bigo(\sqrt{n} \log n)}}\right) \sim \Phi\left(\sqrt{\frac{\frac32k}{n-\frac32k}}\right). \]

Now we are ready to bound the sum from Lemma~\ref{lem:nonverified}. Using the linearity
of expectation and inequality $\min(\frac{1}{6},\frac{1}{2}x) \geq \frac{1}{6}x$ for $x\in[0,1]$ we obtain:
\[ \E\left[\sum_{k=1}^{2n/3} \min\biggl(\frac16,\frac12(1-\predict_k)\biggr)\right] = \sum_{k=1}^{2n/3} \E\biggr[\min\biggl(\frac16,\frac12(1-\predict_k)\biggr)\biggr] \ge \]
\[ \ge \sum_{k=1}^{2n/3} \frac16 (1-\E[\predict_k]) \ge  n\cdot \int_{0}^{2/3} \! \frac16\left(1-\Phi\left(\sqrt{\frac{\frac32x}{1-\frac32x}}\right)\right) \, \mathrm{d}x - o(n).\]
Finally, we calculate \[1 + \int_{0}^{2/3} \! \frac16\left(1-\Phi\left(\sqrt{\frac{\frac32x}{1-\frac32x}}\right)\right) \, \mathrm{d}x \approx 1.0191289.\]

\begin{theorem}
Any algorithm that reports majority exactly requires in expectation at least $1.019n$ comparisons.
\end{theorem}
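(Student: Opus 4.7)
The plan is to string together the inequalities and reductions already established in this section. By Yao's principle, it suffices to lower bound the expected number of comparisons of any deterministic correct algorithm $\mathcal{A}$ when the input is drawn from the two-part distribution defined earlier: with probability $1-1/n$ a uniformly random binary coloring, and with probability $1/n$ a uniformly random coloring from $n$ colors (the latter being what forces $\mathcal{A}$ to truly verify its answer rather than merely infer it).

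Conditioning on the binary branch, Lemma~\ref{lem:verify} reduces the task to lower bounding the expected number of non-verified majority edges created by $\mathcal{A}$, since the remaining work already contributes $n-o(n)$ comparisons. Lemma~\ref{lem:nonverified} further expresses this count as at least $\sum_{k=1}^{2n/3}\E\!\left[\min\!\left(\tfrac{1}{6},\,\tfrac{1}{2}(1-\predict_k)\right)\right]$. I would then substitute the upper bound
\[\E[\predict_k] \le \Phi\!\left(\sqrt{\tfrac{3k/2}{n-3k/2}}\right)\]
derived in the preceding discussion from Erd\H{o}s's anti-concentration inequality, de Moivre--Laplace, Jensen's inequality applied to the concave function $\Phi(\sqrt{\cdot})$, and the martingale identity of Lemma~\ref{lem:sum}.

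Using the elementary inequality $\min(\tfrac{1}{6},\tfrac{1}{2}x) \ge \tfrac{1}{6}x$ on $[0,1]$ and approximating the $k$-sum by a Riemann integral, the contribution of non-verified majority edges is at least
\[n \cdot \int_{0}^{2/3}\!\tfrac{1}{6}\!\left(1 - \Phi\!\left(\sqrt{\tfrac{3x/2}{1-3x/2}}\right)\right)\mathrm{d}x - o(n),\]
which evaluates numerically to roughly $0.01913\,n$. Adding the $n-o(n)$ cost from Lemma~\ref{lem:verify} yields an expected number of comparisons of at least $1.01913\,n - o(n)$ on the binary branch, hence at least $1.019\,n$ on the mixed distribution for all sufficiently large $n$, since the $1/n$-probability branch can only affect lower-order terms.

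The main obstacle is keeping the accumulated error terms under control: the $n^{2/3}$ slack in Lemma~\ref{lem:verify}, the $\bigo(\sqrt{n}\log n)$ slack in the high-probability estimate of $N_k$, the asymptotic nature of de Moivre--Laplace, and the Riemann-sum-to-integral conversion each contribute $o(n)$ losses. However, the computed integral value of approximately $1.0191289$ leaves a gap of about $1.3\times 10^{-4}\,n$ above the claimed $1.019n$, which comfortably absorbs the $o(n)$ slack for large $n$.
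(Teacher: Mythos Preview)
Your proposal is correct and follows essentially the same approach as the paper: Yao's principle with the mixed distribution, Lemma~\ref{lem:verify} to reduce to counting non-verified majority edges, Lemma~\ref{lem:nonverified} for the credit-based lower bound on that count, the Erd\H{o}s/de~Moivre--Laplace/concavity chain to bound $\E[\predict_k]$, the inequality $\min(\tfrac16,\tfrac12 x)\ge\tfrac16 x$, and a Riemann-sum-to-integral conversion yielding the constant $\approx 1.0191289$. Your explicit accounting of the $o(n)$ slack is, if anything, slightly more careful than the paper's own presentation.
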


\section{Conclusions}
We have presented a Las Vegas algorithm for finding a majority color ball using, with high probability,
$\frac{7}{6}n+o(n)$ comparisons. We have also shown that the expected number of comparisons
needs to be at least $1.019n$. We believe that a more careful application of our methods might
slightly increase the lower bound, but achieving $\frac{7}{6}n$, which we believe to be the answer, requires
a new approach. Another interesting question is to consider Monte Carlo algorithms.

\section*{Acknowledgments}
Most of the work has been done while PU was affiliated to Aalto University, Finland.

\bibliographystyle{abbrv}
\bibliography{bib}

\end{document}